\numberwithin{equation}{section}
\newcommand{\R}{{\mathbb R}}
\newcommand{\be}{\begin{eqnarray}}
\newcommand{\ben}{\begin{eqnarray*}}
\newcommand{\en}{\end{eqnarray}}
\newcommand{\enn}{\end{eqnarray*}}
\newcommand{\bs}{\boldsymbol}
\newcommand{\pa}{\partial}
\newcommand{\curl}{{\rm curl\,}}
\newcommand{\divv}{{\rm div\,}}
\newtheorem{theorem}{Theorem}[section]
\newtheorem{lemma}[theorem]{Lemma}
\newtheorem{remark}[theorem]{Remark}
\definecolor{rot}{rgb}{1.000,0.000,0.000}
\definecolor{rot1}{rgb}{0.000,0.000,0.000}
\begin{document}
\renewcommand{\theequation}{\arabic{section}.\arabic{equation}}
\begin{titlepage}
\title{Helmholtz decomposition based windowed Green function methods for elastic scattering problems on a half-space}

\author{
Tao Yin\thanks{LSEC, Institute of Computational Mathematics and Scientific/Engineering Computing, Academy of Mathematics and Systems Science, Chinese Academy of Sciences, Beijing 100190, China. Email:{\tt yintao@lsec.cc.ac.cn}},
Lu Zhang\thanks{School of Mathematical Sciences, Zhejiang University, Hangzhou 310027, China. Email:{\tt 0623216@zju.edu.cn}},
Weiying Zheng\thanks{LSEC, NCMIS, Institute of Computational Mathematics and Scientific/Engineering Computing, Academy of Mathematics and System Sciences, Chinese Academy of Sciences, Beijing, 100190, China. School of Mathematical Science, University of Chinese Academy of Sciences, Beijing 100049, China. Email:{\tt zwy@lsec.cc.ac.cn}},
Xiaopeng Zhu\thanks{School of Mathematical Sciences, Zhejiang University, Hangzhou 310027, China. Email:{\tt 0623559@zju.edu.cn}}}
\date{}
\end{titlepage}
\maketitle

\begin{abstract}
This paper proposes a new Helmholtz decomposition based windowed Green function (HD-WGF) method for solving the time-harmonic elastic scattering problems on a half-space with Dirichlet boundary conditions in both 2D and 3D. The Helmholtz decomposition is applied to separate the pressure and shear waves, which satisfy the Helmholtz and Helmholtz/Maxwell equations, respectively, and the corresponding boundary integral equations of type $(\mathbb{I}+\mathbb{T})\bs\phi=\bs f$, that couple these two waves on the unbounded surface, are derived based on the free-space fundamental solution of Helmholtz equation. This approach avoids the treatment of the complex elastic displacement tensor and traction operator that involved in the classical integral equation method for elastic problems. Then a smooth ``slow-rise'' windowing function is introduced to truncate the  boundary integral equations and a ``correction'' strategy is proposed to ensure the uniformly fast convergence for all incident angles of plane incidence. Numerical experiments for both two and three dimensional problems are presented to demonstrate the accuracy and efficiency of the proposed method.

{\bf Keywords:} Elastic scattering, half-space, windowed Green function, boundary integral equation
\end{abstract}

\section{Introduction}
\label{sec:1}

The problems of scattering of acoustic, electromagnetic, and elastic waves by unbounded rough surface (or called {\it layered-medium problems}, {\it half-space problems} sometimes) are of significant importance in many application fields in science and engineering and they have been attracted the interest of researchers in both engineering and mathematical circles for many years. This work focuses on developing new efficient high-order integral equation solvers for the elastic scattering problems on a half-space~\cite{A73,S58}. Compared to the problems with bounded obstacles in free space, the study of the half-space problems generally suffer more challenges in the issues of {well-posedness}~\cite{A01,A02,EH12,EH15,HLQZ15} and numerical approximation~\cite{BY21,CB13,CBS08}. In contrast to volumetric discretization methods, like the finite eleme/{difference} methods, the boundary integral equation (BIE) method~\cite{HW08} only requires discretization of regions of lower dimensionality and the radiation condition at infinity can be automatically enforced. In addition, together with adequate acceleration techniques~\cite{BK01,CBS08} for the associated matrix-vector products and preconditioning~\cite{BY20} to improve the spectral properties of the linear system, the BIE method can provide fast, high-order solvers even for problems of high frequency.

The method of using layer Green function (LGF)~\cite{CB14} is one of the most popular integral equation approach for the half-space scattering problems. This method brings convenience for the study of wellposedness~\cite{HYZ18} and it automatically enforces the relevant boundary conditions on the unbounded flat surface and hence, can reduce the original problems to integral equations only on the defects. However, evaluation of the LGF requires computation of challenging Fourier integrals containing highly-oscillatory integrands over infinite integration intervals. In particular, the LGF for the elastic problems is much more complex~\cite{A00,CZ19,DMN11} than that for acoustic problems. Another efficient integral equation approach to treat the half-space scattering problems is the method of using free-space Green function (FGF) \cite{DM97,DM98} whose evaluation is much lower, by order of magnitude, than the LGF evaluation cost. But the BIEs based on FGF are posed on the complete unbounded interface which means that, for computation purpose, appropriate domain-truncation strategy must be introduced and therefore, problems with regard to selection of suitable truncation radii {arise.} It is suggested in~\cite{CBS08,CB13} that a truncation radius equal to three to five times the radius of the surface irregularity yields acceptable accuracy for the elastic half-space problems with normal-incidence. However, as illustrated in \cite{BY21}, the numerical accuracy using this principle will deteriorate quickly as the {incident} angle approaches grazing.

The windowed Green function (WGF) method~\cite{P16,LGO18} has been proved to be an efficient truncation and modification technique to ensure the uniform fast convergence over all incident angles as the size of truncation domain grows for the method of using FGF~\cite{BLPT16,BP17} and this method has also been extended to the problems of scattering by periodic structures~\cite{BD14,BSTV16,M07}, nonuniform waveguide problems~\cite{BGP17} and long-range volumetric propagation~\cite{CBA09}. Utilizing smooth operator windowing based on a ``slow-rise'' windowing function, efficient high-order singular-integration methods based on the Chebyshev-based rectangular-polar discretization methodology~\cite{BG20,BY20} and equivalent regularized formulations of the strongly-singular integral operators, the corresponding WGF method for solving the elastic half-space problems was first developed in \cite{BY21}. It is also meaningful to mention that recently, perfectly-matched-layer (PML) based integral equation solvers~\cite{LLQ18,LXYZ23} have been developed for the layered-medium scattering problem. But the application of this approach to the elastic layered-{medium} problems has not been developed yet which will be left for future work and in particular, the convergence of the PML truncation remains open.

In contrast to \cite{BY21}, this paper proposes a new Helmholtz decomposition based windowed Green function (HD-WGF) method for solving the elastic half-space problems. The Helmholtz decomposition allows us to split the displacement of the elastic wave field into the compressional wave and the shear wave which satisfy the Helmholtz and Helmholtz/Maxwell equations, respectively. As a result, the elastic scattering problem can be converted equivalently into a coupled boundary value problem of the Helmholtz and Maxwell equations for the potentials. This technique has been introduced into the study of the BIE methods for the problems of elastic scattering by bounded obstacles in~\cite{DLL21,DLL22,LL19} and reduces greatly the complexity for the computation of the elastic scattering problem since, compared with the classical BIE methods, it avoids the treatment of the full elastic fundamental tensor and the traction operator. This work presents the first attempt to apply the Helmholtz decomposition to the more difficult half-space problems. Unlike the procedure in~\cite{DLL21,DLL22,LL19} using indirect layer potential representations of the solutions, the direct solution representations resulting from Green's formula should be adopted here so that the modification idea of the WGF method can be incorporated into the windowed integral equations, and then, uniform fast convergence over all incident angles can be achieved. Analogous to~\cite{BY21}, the Chebyshev-based rectangular-polar discretization methodology is utilized for the numerical implementation of the HD-WGF method which, as a result of the Helmholtz decomposition, only requires the evaluation of several weakly-singular integral operators in terms of free-space fundamental solution to the Helmholtz equations with compressional and shear wave numbers as well as some surface differential operators.

This paper is organized as follows. Section~\ref{sec:2} introduces the considered elastic half-space problems. Section~\ref{sec:3} presents the Helmholtz decompositions and the corresponding BIEs in both 2D and 3D based on the free-space Green function. The HD-WGF method is proposed in Section~\ref{sec:4}, including the preliminary windowed integral formulations and ``corrected'' ones which are uniformly accurate for all incident angles, up to grazing. Numerical examples in 2D and 3D, finally, are presented in Section~\ref{sec:5} to demonstrate the accuracy and efficiency of the overall proposed approach.

\section{Elastic scattering problems}
\label{sec:2}

\begin{figure}[ht]
\centering
\includegraphics[scale=0.3]{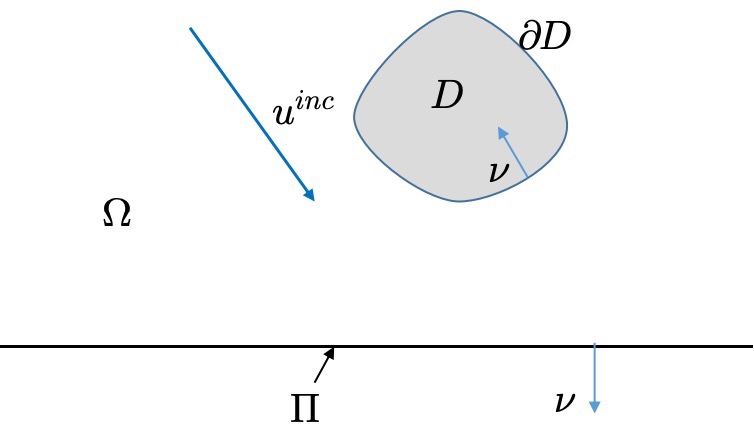}
\caption{Problem of scattering by a locally perturbed elastic
  half-space $\Omega\subseteq \mathbb{R}^d$ ($d=2,3$).}
\label{model}
\end{figure}

As shown in Figure~\ref{model}, let $D\in\R^d_+,d=2,3$ be a bounded obstacle with Lipschitz boundary $\pa D$ and denote by $\Omega\in\R^d_+$ the unbounded connected open set exterior to $D$ such that
\ben
U_{f_+}\subset\Omega\subset U_{f_-},\quad
U_{f_\pm}:=\{{\bs x}=(x_1,\dots,x_d)^\top\in\R^d: x_d>f_\pm\}
\enn
for certain constants $f_-<f_+$. For simplicity, the unbounded rough surface $\Gamma:=\pa\Omega$ is assumed to encompasse the
flat surface $\Pi:=\{{\bs x}\in\R^d: x_d=0\}$ and $\pa D$, but the  approach shown in this work can be extended to the more general case that the unbounded rough surface $\Gamma$ consists both a local perturbation of the flat surface $\Pi$ and/or the boundary of bounded impenetrable obstacles in $U_0$, see Remark~\ref{localdefect2D}. Let  $\nu=(\nu_1,\nu_2)^\top$ be the outward unit normal to $\Gamma$ and $\partial_\nu:=\nu\cdot\mbox{grad}$ be the normal derivative. Suppose that the $\Omega$ is full filled with a linear isotropic and homogeneous elastic medium characterized by the Lam\'e constants $\lambda,\mu$ ($\mu>0$, $d\lambda+2\mu>0$) and the mass density $\rho>0$. Given an incident field ${\bs u}^\mathrm{inc}$, we consider the following time-harmonic elastic scattering problems on a half-space: find the scattered displacement field ${\bs u}^\mathrm{sca}$ satisfying the Navier equation
\be
\label{navier}
\Delta^*{\bs u}^{\mathrm{sca}}+\rho\omega^2{\bs u}^{\mathrm{sca}}=0 \quad\mbox{in}\quad\Omega,
\en
the Dirichlet boundary condition
\be
\label{Dirichlet}
{\bs u}^{\mathrm{sca}}=-{\bs u}^\mathrm{inc}\quad\mbox{on}\quad\Gamma,
\en
and the so-called upward propagating radiation condition (UPRC) at infinity~\cite{EH12,CGK02}. Here, the Lam\'e operator $\Delta^{*}$ is defined as
\ben
\label{LameOper}
\Delta^* := \mu\Delta + (\lambda +
\mu)\nabla\nabla\cdot,
\enn
and $\omega$ is the angular frequency. Moreover, for the considered linear isotropic medium, the pressure and shear wave numbers $k_p,k_s$ are given by
\ben
k_s := \omega\sqrt{\rho/\mu},\quad k_p :=
\omega \sqrt{\rho/(\lambda + 2\mu)}.
\enn

In this work, the incident field ${\bs u}^\mathrm{inc}$ is specified as a coupling of the plane pressure wave ${\bs u}_p^\mathrm{inc}$ and shear wave ${\bs u}_s^\mathrm{inc}$ with different incident angles, i.e.,
\ben
{\bs u}^\mathrm{inc}= c_p{\bs u}_p^\mathrm{inc}+c_s{\bs u}_s^\mathrm{inc},
\enn
where $c_\xi,\xi=p,s$ are constants.
\begin{itemize}
\item {\bf Two-dimensional case.} Let $\theta_p^{\rm inc}$ and $\theta_s^{\rm inc}$ be the incident angles of the pressure and shear waves, respectively, satisfying $|\theta_\xi^{\rm inc}|<\pi/2, \xi=p,s$. Denote
\ben
\alpha_\xi=k_\xi\sin\theta_\xi^{\rm inc},\quad \xi=p,s,
\enn
and correspondingly,
\ben
\beta_{\xi_1,\xi_2}=\begin{cases}
\sqrt{k_{\xi_1}^2-\alpha_{\xi_2}^2}, & |\alpha_{\xi_2}|\le k_{\xi_1},\cr
i\sqrt{\alpha_{\xi_2}^2 -k_{\xi_1}^2} , & |\alpha_{\xi_2}|> k_{\xi_1},
\end{cases}
 \quad \xi_1,\xi_2=p,s.
\enn
The incident fields ${\bs u}_\xi^\mathrm{inc},\xi=p,s$ in two dimensions are given by
\be
\label{2Dinc}
{\bs u}_p^\mathrm{inc}(\bs x)= \frac{i}{k_p}\begin{pmatrix}
\alpha_p \\
-\beta_{p,p}
\end{pmatrix}e^{i(\alpha_px_1-\beta_{p,p}x_2)},\quad  {\bs u}_s^\mathrm{inc}(\bs x)=-\frac{i}{k_s}\begin{pmatrix}
\beta_{s,s} \\
\alpha_s
\end{pmatrix}e^{i(\alpha_sx_1-\beta_{s,s}x_2)}.
\en

\item {\bf Three-dimensional case.} Let $(\theta_p^{\rm inc},\varphi_p^{\rm inc})$ and $(\theta_s^{\rm inc},\varphi_s^{\rm inc})$ be the incident angle pairs of the pressure and shear waves, respectively, satisfying $|\theta_\xi^{\rm inc}|<\pi/2,  \varphi_\xi^{\rm inc}\in[0,2\pi), \xi=p,s$. Denote $\widetilde{\bs x}=(x_1,x_2)^\top$ and
\ben
{{\bs \alpha}_\xi=k_\xi(\sin\theta_\xi^{\rm inc}\cos\varphi_\xi^{\rm inc}, \sin\theta_\xi^{\rm inc}\sin\varphi_\xi^{\rm inc})},\quad \xi=p,s,
\enn
and correspondingly,
\ben
\beta_{\xi_1,\xi_2}=\begin{cases}
\sqrt{k_{\xi_1}^2-|\bs\alpha_{\xi_2}|^2}, & |\bs\alpha_{\xi_2}|\le k_{\xi_1},\cr
i\sqrt{|\bs\alpha_{\xi_2}|^2 -k_{\xi_1}^2} , & |\bs\alpha_{\xi_2}|> k_{\xi_1},
\end{cases}
 \quad \xi_1,\xi_2=p,s.
\enn
Then given a polarization vector $\bs d_s\in\mathbb{S}^2=\{\bs x\in{\R^3}: |\bs x|=1\}$ satisfying $\bs d_s\cdot (\bs \alpha_s^\top, -\beta_{s,s})^\top=0$, the incident fields ${\bs u}_\xi^\mathrm{inc},\xi=p,s$ in three dimensions are given by
\be
\label{3Dinc}
{\bs u}_p^\mathrm{inc}(\bs x)= \frac{i}{k_p}\begin{pmatrix}
\bs\alpha_p^\top \\
-\beta_{p,p}
\end{pmatrix}e^{i(\bs\alpha_p\cdot \widetilde{\bs x}-\beta_{p,p}x_3)},\quad {{\bs u}_s^\mathrm{inc}(\bs x)= \frac{i}{k_s}\begin{pmatrix}
\bs \alpha_s^\top\\
-\beta_{s,s}
\end{pmatrix}\times \bs d_se^{i(\bs\alpha_s\cdot \widetilde{\bs x}-\beta_{s,s}x_3)}.}
\en
\end{itemize}

%\begin{remark}
%If there is no local defect on the flat surface and no bounded obstacles inside the half-space, i.e., $\Gamma=\Pi$, the (unique) exact scattered fields can be derived explicitly, see Appendix.
%\end{remark}

%\begin{remark}
%If there is no local defect on the flat surface and no bounded obstacles inside the half-space, i.e., $\Gamma=\Pi$, the (unique) exact scattered fields $\bs u^\mathrm{scat}_{\xi,\Pi}$ due to the incident plane waves $\bs u^{\rm inc}$ for $\xi=p,s$ are given by
%\ben
%u^\mathrm{scat}_{p,\Pi}=-\frac{1}{k_p}\frac{\alpha_p^2-\beta_{p,p}\beta_{s,p}}{\alpha_p^2+\beta_{p,p}\beta_{s,p}}\begin{pmatrix}
%\alpha_p \\
%\beta_{p,p}
%\end{pmatrix}e^{i(\alpha_px_1+\beta_{p,p}x_2)}- \frac{1}{k_p}\frac{2\alpha_p\beta_{p,p}}{\alpha_p^2+\beta_{p,p}\beta_{s,p}}\begin{pmatrix}
%\beta_{s,p} \\
%-\alpha_p
%\end{pmatrix}e^{i(\alpha_px_1+\beta_{s,p}x_2)},
%\enn
%and
%\ben
%u^\mathrm{scat}_{s,\Pi}=-\frac{1}{k_s}\frac{2\alpha_s\beta_{s,s}}{\alpha_s^2+\beta_{s,s}\beta_{p,s}}\begin{pmatrix}
%\alpha_s \\
%\beta_{p,s}
%\end{pmatrix}e^{i(\alpha_sx_1+\beta_{p,s}x_2)}+ \frac{1}{k_s}\frac{\alpha_s^2-\beta_{s,s}\beta_{p,s}}{\alpha_s^2+\beta_{s,s}\beta_{p,s}}\begin{pmatrix}
%\beta_{s,s} \\
%-\alpha_s
%\end{pmatrix}e^{i(\alpha_sx_1+\beta_{s,s}x_2)},
%\enn
%respectively.
%\end{remark}

\section{Helmholtz decompositions and boundary integral equations}
\label{sec:3}

In this section, instead of solving the elastic displacement field, we rewrite the elastic scattering problems by means of Helmholtz decomposition which decomposes the elastic field into a compressional part and a shear part, and then derive the corresponding boundary integral equations based on the free-space Green function. In the following, denote by $G_k(\cdot,\cdot)$ the free-space Green function of the Helmholtz equation with wavenumber $k>0$ and it is given by
\ben
G_{k}(x,y)=\begin{cases}
\frac{i}{4}H_0^{(1)}(k|x-y|), & d=2,\cr
\frac{e^{ik|x-y|}}{4\pi|x-y|}, & d=3,
\end{cases}\quad x\ne y,\quad x,y\in\R^d.
\enn
For an arbitrary surface $\gamma$ and $\xi=p,s$, let $\mathcal{S}_\xi^\gamma$ and $\mathcal{D}_\xi^\gamma$ denote the single- and double-layer potentials, respectively, that are defined as
\ben
\mathcal{S}_\xi^\gamma[\phi](x)&=&\int_\gamma G_{k_\xi}(x,y)\phi(y)ds_y,\quad x\notin\gamma,\\
\mathcal{D}_\xi^\gamma[\psi](x)&=&\int_\gamma
\partial_{\nu_y}G_{k_\xi}(x,y)\psi(y)ds_y,\quad x\notin\gamma.
\enn
Accordingly,
\be
\label{slo}
V_\xi^\gamma[\phi](x)&=& \int_\gamma G_{k_\xi}(x,y)\phi(y)ds_y,\quad x\in\gamma,\\
\label{dlo}
K_\xi^\gamma[\psi](x)&=& \int_\gamma \partial_{\nu_y}G_{k_\xi}(x,y)\psi(y)ds_y,\quad x\in\gamma
\en
denote the single- and double-layer integral operators on $\gamma$, respectively.

\begin{remark}
The windowed Green function method proposed in \cite{BY21} solves the elastic scattering problems on a half space based on the elastic fundamental tensor which is given by
\ben
E(x,y)=\frac{1}{\mu}G_{k_s}(x,y)I+\frac{1}{\rho\omega^2}
\nabla_x\nabla_x^\top \left[G_{k_s}(x,y)-G_{k_p}(x,y)\right].
\enn
For the Dirichlet problem, the double-layer integral operator
\be
\label{tdlo}
K[\psi](x)&=&\int_\Gamma T(\pa_x,\nu_x)E(x,y)\psi(y)ds_y,\quad
x\in\Gamma, \en
is involved in the boundary integral equation whereas $T(\pa,\nu)$ denotes the traction operator defined as
\be
\label{stress}
T(\pa,\nu)u:=2 \mu \, \partial_{\nu} u + \lambda \, \nu \, \divv u+\mu
\nu\times \curl u.
\en
The new approach proposed in this work can avoid the treatment of the complicated elastic displacement tensor and the complicated kernel in elastic double-layer integral operator $K$. No use of the traction operator is required.
\end{remark}

\subsection{Two-dimensional case}
\label{sec:3.1}

Denote $\curl v=\pa_{x_1}v_2-\pa_{x_2}v_1$ for a vector function $v=(v_1,v_2)^\top$ and $\overrightarrow{\mathrm{curl}}\,w=(\pa_{x_2}w,-\pa_{x_1}w)^\top$ for a scale function $w$.  Correspondingly, we denote by $\tau=(-\nu_2,\nu_1)^\top$ and $\pa_\tau:=\tau\cdot\mbox{grad}$ the unit tangential vectors on $\Gamma$ and the tangential derivative, respectively. It is known that the solution $\bs u^\mathrm{sca}$ of the Navier equation (\ref{navier}) in two dimensions admits the Helmholtz decomposition
\be
\label{2DHDsca}
\bs u^\mathrm{sca}=\nabla \phi_p^\mathrm{sca}+ \overrightarrow{\mathrm{curl}}\,\phi_s^\mathrm{sca},
\en
where $\phi_p^\mathrm{sca},\psi_s^\mathrm{sca}$ are two scalar functions satisfying the Helmholtz equations
\be
\label{2Dhelmholtz}
\Delta\phi_\xi^\mathrm{sca}+k_\xi^2\phi_\xi^\mathrm{sca}=0,\quad \xi=p,s.
\en
Note that the incident field $\bs u^\mathrm{inc}$ also admits a form of Helmholtz equations reading
\be
\label{2DHDinc}
\bs u^\mathrm{inc}=\nabla \phi_p^\mathrm{inc}+ \overrightarrow{\mathrm{curl}}\,\phi_s^\mathrm{inc},
\en
where
\ben
\phi_p^\mathrm{inc}=\frac{c_p}{k_p}e^{i(\alpha_px_1-\beta_{p,p}x_2)},\quad  \phi_s^\mathrm{inc}=\frac{c_s}{k_s}e^{i(\alpha_sx_1-\beta_{s,s}x_2)},
\enn
and they satisfy the Helmholtz equations
\ben
\Delta\phi_\xi^\mathrm{inc}+k_\xi^2\phi_\xi^\mathrm{inc}=0,\quad \xi=p,s.
\enn
Denote $\phi_\xi^\mathrm{tot}=\phi_\xi^\mathrm{sca}+\phi_\xi^\mathrm{inc}$, then the original two-dimensional elastic scattering problem can be reduced to the following coupled Helmholtz system
\be
\label{2DHS1}
\begin{cases}
\Delta\phi_p^\mathrm{sca}+k_p^2\phi_p^\mathrm{sca}=0,\quad \Delta\phi_s^\mathrm{sca}+k_s^2\phi_s^\mathrm{sca}=0\quad \mbox{in}\quad \Omega,\\
\pa_\nu\phi_p^\mathrm{tot}+\pa_\tau\phi_s^\mathrm{tot}=0,\quad \pa_\tau\phi_p^\mathrm{tot}-\pa_\nu\phi_s^\mathrm{tot}=0\quad \mbox{on}\quad \Gamma,
\end{cases}
\en
and $\phi_p^\mathrm{sca},\phi_s^\mathrm{sca}$ satisfy the half-plane UPRC at infinity.

\begin{lemma}
The solutions $\phi_\xi^\mathrm{tot},\xi=p,s$ on $\Gamma$ in two dimensions satisfy the boundary integral equations
\be
\label{2DBIE}
\left(\frac{\mathbb{I}}{2}+\mathbb{T}_\Gamma\right)\begin{bmatrix}
\phi_p^\mathrm{tot}\\
\phi_s^\mathrm{tot}
\end{bmatrix}=\begin{bmatrix}
\phi_p^\mathrm{inc}\\
\phi_s^\mathrm{inc}
\end{bmatrix} \quad\mbox{on}\quad\Gamma \quad\mbox{with}\quad \mathbb{T}_\Gamma=\begin{bmatrix}
K_p^\Gamma & V_p^\Gamma\pa_\tau \\
-V_s^\Gamma\pa_\tau & K_s^\Gamma
\end{bmatrix}.
\en
\end{lemma}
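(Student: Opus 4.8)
The plan is to start from the Helmholtz-equation solutions $\phi_\xi^{\mathrm{sca}}$ in $\Omega$ and apply Green's representation formula (direct method) for each wavenumber $k_\xi$, $\xi=p,s$. Since $\phi_\xi^{\mathrm{sca}}$ satisfies $\Delta\phi_\xi^{\mathrm{sca}}+k_\xi^2\phi_\xi^{\mathrm{sca}}=0$ in $\Omega$ together with the half-plane UPRC at infinity, Green's third identity gives, for $x\in\Omega$,
\[
\phi_\xi^{\mathrm{sca}}(x)=\mathcal{D}_\xi^\Gamma[\phi_\xi^{\mathrm{sca}}](x)-\mathcal{S}_\xi^\Gamma[\partial_\nu\phi_\xi^{\mathrm{sca}}](x),
\]
where the vanishing of the contribution from the flat part/infinity is justified exactly as for the acoustic half-space problem via the UPRC (this is the standard argument; the sign conventions follow from the choice of outward normal $\nu$ on $\Gamma$). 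Adding the incident field and letting $x$ approach $\Gamma$ from within $\Omega$, the standard jump relations for the single- and double-layer potentials convert this into the boundary identity
\[
\tfrac12\phi_\xi^{\mathrm{tot}}=\phi_\xi^{\mathrm{inc}}+K_\xi^\Gamma[\phi_\xi^{\mathrm{tot}}]-V_\xi^\Gamma[\partial_\nu\phi_\xi^{\mathrm{tot}}]\quad\text{on }\Gamma,
\]
using $\phi_\xi^{\mathrm{tot}}=\phi_\xi^{\mathrm{sca}}+\phi_\xi^{\mathrm{inc}}$ and the fact that $\phi_\xi^{\mathrm{inc}}$ is a smooth solution of the same Helmholtz equation so its own representation contributes no jump inside $\Omega$.

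Next I would eliminate the unknown Neumann traces $\partial_\nu\phi_\xi^{\mathrm{tot}}$ in favour of the Dirichlet traces using the coupled boundary conditions in \eqref{2DHS1}. The second and first equations there read $\partial_\nu\phi_p^{\mathrm{tot}}=-\partial_\tau\phi_s^{\mathrm{tot}}$ and $\partial_\nu\phi_s^{\mathrm{tot}}=\partial_\tau\phi_p^{\mathrm{tot}}$ on $\Gamma$. Substituting these into the two boundary identities above yields
\[
\tfrac12\phi_p^{\mathrm{tot}}+K_p^\Gamma[\phi_p^{\mathrm{tot}}]+V_p^\Gamma[\partial_\tau\phi_s^{\mathrm{tot}}]=\phi_p^{\mathrm{inc}},\qquad
\tfrac12\phi_s^{\mathrm{tot}}+K_s^\Gamma[\phi_s^{\mathrm{tot}}]-V_s^\Gamma[\partial_\tau\phi_p^{\mathrm{tot}}]=\phi_s^{\mathrm{inc}},
\]
which is precisely the claimed system $\bigl(\tfrac{\mathbb I}{2}+\mathbb T_\Gamma\bigr)[\phi_p^{\mathrm{tot}},\phi_s^{\mathrm{tot}}]^\top=[\phi_p^{\mathrm{inc}},\phi_s^{\mathrm{inc}}]^\top$ with the stated operator matrix $\mathbb T_\Gamma$, after writing $V_\xi^\Gamma\partial_\tau$ for the composition $\phi\mapsto V_\xi^\Gamma[\partial_\tau\phi]$.

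The main obstacle I anticipate is the careful justification of Green's representation formula on the \emph{unbounded} surface $\Gamma$: one must verify that the boundary term over a large truncating arc vanishes in the limit, which relies on the precise form of the UPRC for the half-plane (as opposed to the Sommerfeld condition for exterior bounded-obstacle problems) and on decay/radiation estimates for $G_{k_\xi}$ and $\phi_\xi^{\mathrm{sca}}$ along the flat surface $\Pi$. A secondary technical point is the regularity needed to make sense of the tangential derivatives $\partial_\tau\phi_\xi^{\mathrm{tot}}$ and of the compositions $V_\xi^\Gamma\partial_\tau$ as bounded operators in the appropriate trace spaces on the Lipschitz surface $\Gamma$; granting the mapping properties of $V_\xi^\Gamma$, $K_\xi^\Gamma$ that are implicit in the setup, the remaining steps are the routine jump relations and algebraic substitution described above.
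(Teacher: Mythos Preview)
Your approach is the same as the paper's: Green's representation for $\phi_\xi^{\mathrm{sca}}$, an analogous identity for $\phi_\xi^{\mathrm{inc}}$, substitution of the coupled Neumann--tangential boundary conditions from \eqref{2DHS1}, and the jump relations. Two points, however, deserve correction.

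First, a sign inconsistency: with $\nu$ the outward normal to $\Gamma=\partial\Omega$ (pointing out of $\Omega$), the paper's representation reads
\[
\phi_\xi^{\mathrm{sca}}(x)=\mathcal{S}_\xi^\Gamma[\partial_\nu\phi_\xi^{\mathrm{sca}}](x)-\mathcal{D}_\xi^\Gamma[\phi_\xi^{\mathrm{sca}}](x),\qquad x\in\Omega,
\]
i.e.\ the signs opposite to yours. Your stated intermediate identity $\tfrac12\phi_\xi^{\mathrm{tot}}=\phi_\xi^{\mathrm{inc}}+K_\xi^\Gamma[\phi_\xi^{\mathrm{tot}}]-V_\xi^\Gamma[\partial_\nu\phi_\xi^{\mathrm{tot}}]$ does \emph{not} algebraically yield the final equations you write: substituting $\partial_\nu\phi_p^{\mathrm{tot}}=-\partial_\tau\phi_s^{\mathrm{tot}}$ into it gives $(\tfrac{\mathbb I}{2}-\mathbb T_\Gamma)[\cdots]=[\cdots]$, not $(\tfrac{\mathbb I}{2}+\mathbb T_\Gamma)$. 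With the paper's signs and the jump $\mathcal{D}_\xi^\Gamma\to -\tfrac12+K_\xi^\Gamma$ from the $\Omega$ side, the correct final system drops out.

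Second, the passage from $\phi_\xi^{\mathrm{sca}}$ to $\phi_\xi^{\mathrm{tot}}$ is not a ``no jump'' remark; the paper uses the separate identity
\[
0=\mathcal{S}_\xi^\Gamma[\partial_\nu\phi_\xi^{\mathrm{inc}}](x)-\mathcal{D}_\xi^\Gamma[\phi_\xi^{\mathrm{inc}}](x),\qquad x\in\Omega,
\]
valid for the down-going plane waves with $|\theta_\xi^{\mathrm{inc}}|<\pi/2$ (cf.~\cite{DM97}), and adds it to the scattered-field representation before taking the trace. Your phrasing obscures that this is a distinct fact (the incident field does not satisfy the UPRC, so one cannot simply run Green's formula for it over $\Omega$).
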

\begin{proof}
For each $\xi=p,s$ and following the analysis in~\cite{DM97}, the scattered fields $\phi_\xi^\mathrm{sca}$ admit the representations
\be
\label{2Dscawaverepre}
\phi_\xi^\mathrm{sca}(x)= \mathcal{S}_\xi^\Gamma[\partial_\nu\phi_\xi^\mathrm{sca}](x)-\mathcal{D}_\xi^\Gamma[\phi_\xi^\mathrm{sca}](x),\quad x\in\Omega.
\en
When $|\theta_\xi^\mathrm{inc}|<\pi/2, \xi=p,s$, it can be verified that the incident field $\phi_\xi^\mathrm{inc}$ satisfies
\be
\label{2Dincwaverepre}
0=\mathcal{S}_\xi^\Gamma[\partial_\nu\phi_\xi^\mathrm{inc}](x)-\mathcal{D}_\xi^\Gamma[\phi_\xi^\mathrm{inc}](x),\quad x\in\Omega.
\en
Then combining the formulations (\ref{2Dscawaverepre}) and (\ref{2Dincwaverepre}) and applying the boundary conditions in (\ref{2DHS1}) yield the solution representations
\be
\label{2DscawaveP}
\phi_p^\mathrm{sca}(x) &=& -\mathcal{S}_p^\Gamma[\partial_\tau\phi_s^\mathrm{tot}](x)-\mathcal{D}_p^\Gamma[\phi_p^\mathrm{tot}](x),\quad x\in\Omega,\\
\label{2DscawaveS}
\phi_s^\mathrm{sca}(x) &=& \mathcal{S}_s^\Gamma[\partial_\tau\phi_p^\mathrm{tot}](x)-\mathcal{D}_s^\Gamma[\phi_s^\mathrm{tot}](x),\quad x\in\Omega.
\en
Taking the limit as $x\rightarrow\Gamma$ in (\ref{2DscawaveP}) and (\ref{2DscawaveS}) and using the well-known jump relations associated with the acoustic layer potentials~\cite{CK13}, we obtain the BIEs
\be
\label{2DBIE1}
\frac{1}{2}\phi_p^\mathrm{tot}+K_p^\Gamma[\phi_p^\mathrm{tot}]+V_p^\Gamma[\pa_\tau\phi_s^\mathrm{tot}] &=& \phi_p^\mathrm{inc} \quad\mbox{on}\quad\Gamma,\\
\label{2DBIE2}
\frac{1}{2}\phi_s^\mathrm{tot}+K_s^\Gamma[\phi_s^\mathrm{tot}]-V_s^\Gamma[\pa_\tau\phi_p^\mathrm{tot}] &=& \phi_s^\mathrm{inc} \quad\mbox{on}\quad\Gamma,
\en
which completes the proof.
\end{proof}

\subsection{Three-dimensional case}
\label{sec:3.2}

To derive the boundary integral equations in three dimensions, for an arbitrary surface $\gamma$, we denote $\nabla_\gamma$ and $\mathrm{div}_\gamma$ the tangential gradient and surface divergence on $\gamma$, respectively, which are defined for a scalar function $u$ and a vector function $\bs v$ by the following equalities
\ben
\nabla_\gamma u=\nabla u-\nu\partial_\nu u,\quad \mathrm{div}_\gamma\bs v=\mathrm{div}\bs v-\nu\cdot\partial_\nu \bs v.
\enn
Denote $\bs\curl_\gamma=\nu\times \nabla_\gamma$ the vector surface curl on $\gamma$.

For the three-dimensional case, it is known that the solution $\bs u^\mathrm{sca}$ of the Navier equation (\ref{navier}) admits the Helmholtz decomposition
\be
\label{3DHDsca}
\bs u^\mathrm{sca}=\nabla \phi_p^\mathrm{sca}+ \nabla\times\bs \phi_s^\mathrm{sca},
\en
where $\phi_p^\mathrm{sca}$ is a scalar function satisfying the Helmholtz equation
\be
\label{3Dhelmholtz}
\Delta\phi_p^\mathrm{sca}+k_p^2\phi_p^\mathrm{sca}=0,
\en
and $\bs\phi_s^\mathrm{sca}$ is a vector function satisfying $\nabla\cdot\bs\phi_s^\mathrm{sca}=0$ and the Maxwell equation
\be
\label{3Dmaxwell}
\nabla\times\nabla\times\bs\phi_s^\mathrm{sca}-k_s^2\bs\phi_s^\mathrm{sca}=0.
\en
The considered incident field $\bs u^\mathrm{inc}$ in three dimensions admits a form of Helmholtz decomposition reading
\be
\label{3DHDinc}
\bs u^\mathrm{inc}=\nabla \phi_p^\mathrm{inc}+ \nabla\times\bs \phi_s^\mathrm{inc},
\en
where
\ben
\phi_p^\mathrm{inc}=\frac{c_p}{k_p}e^{i(\bs\alpha_p\cdot\widetilde{x}-\beta_{p,p}x_3)},\quad  \bs\phi_s^\mathrm{inc}=\frac{c_s}{k_s}\bs d_se^{i(\bs\alpha_s\cdot\widetilde{x}-\beta_{s,s}x_3)},
\enn
and they satisfy the Helmholtz equation (\ref{3Dhelmholtz}) and Maxwell equation (\ref{3Dmaxwell}), respectively, and additionally, $\nabla\cdot \bs\phi_s^\mathrm{inc}=0$ since $\bs d_s\cdot (\bs \alpha_s; -\beta_{s,s})=0$.
Denote $\phi_p^\mathrm{tot}=\phi_p^\mathrm{sca}+\phi_p^\mathrm{inc}$, $\bs\phi_s^\mathrm{tot}=\bs\phi_s^\mathrm{sca}+\bs\phi_s^\mathrm{inc}$, then we can obtain the following coupled Helmholtz system
\be
\label{3DHS1}
\begin{cases}
\Delta\phi_p^\mathrm{sca}+k_p^2\phi_p^\mathrm{sca}=0,\quad \nabla\times\nabla\times\bs\phi_s^\mathrm{sca}-k_s^2\bs\phi_s^\mathrm{sca}=0\quad \mbox{in}\quad \Omega,\\
\pa_\nu\phi_p^\mathrm{tot}+\nu\cdot\nabla\times\bs\phi_s^\mathrm{tot}=0,\quad \bs\curl_\Gamma\phi_p^\mathrm{tot}+\nu\times\nabla\times\bs\phi_s^\mathrm{tot}=0\quad \mbox{on}\quad \Gamma.
\end{cases}
\en

\begin{lemma}
The solutions $\phi_p^\mathrm{tot}$ and $\bs\phi_s^\mathrm{tot}$ on $\Gamma$ in three dimensions satisfy the boundary integral equations
\be
\label{3DBIE}
\left(\frac{\mathbb{I}}{2}+\mathbb{T}_\Gamma\right)\begin{bmatrix}
\phi_p^\mathrm{tot}\\
\bs\phi_s^\mathrm{tot}\times\nu
\end{bmatrix}=\begin{bmatrix}
\phi_p^\mathrm{inc}\\
\bs\phi_s^\mathrm{inc}\times\nu
\end{bmatrix}\quad\mbox{on}\quad\Gamma \quad\mbox{with}\quad\ \mathbb{T}_\Gamma=\begin{bmatrix}
K_p^\Gamma & V_p^\Gamma\mathrm{div}_\Gamma \\
\nu\times V_s^\Gamma\bs\curl_\Gamma & M_s^\Gamma
\end{bmatrix},
\en
where the integral operator $M_s^\Gamma$, called the magnetic dipole operator, is given by
\ben
M^\Gamma_s[\bs \phi](x)=\int_\Gamma \left[\nabla_xG_{k_s}(x,y)(\nu_x^\top-\nu_y^\top)\bs\phi(y) -\pa_{\nu_x}G_{k_s}(x,y)\bs\phi(y) \right] ds_y,\quad x\in\Gamma.
\enn
\end{lemma}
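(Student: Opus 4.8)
The plan is to follow the derivation of the two‑dimensional system in the preceding lemma, with the scalar Green representation of the shear potential replaced by the Stratton--Chu representation of the vector potential $\bs\phi_s^{\mathrm{sca}}$. First, since $\bs\phi_s^{\mathrm{sca}}$ is divergence free and solves the Maxwell equation (\ref{3Dmaxwell}), the Stratton--Chu formula (with $\nu$ the outward unit normal to $\Omega$, under the radiation condition at infinity) gives, for $x\in\Omega$,
\[
\bs\phi_s^{\mathrm{sca}}(x)=-\nabla_x\times\!\int_\Gamma \big(\nu_y\times\bs\phi_s^{\mathrm{sca}}(y)\big)G_{k_s}(x,y)\,ds_y-\frac{1}{k_s^2}\nabla_x\times\nabla_x\times\!\int_\Gamma \big(\nu_y\times(\nabla_y\times\bs\phi_s^{\mathrm{sca}}(y))\big)G_{k_s}(x,y)\,ds_y,
\]
while $\phi_p^{\mathrm{sca}}$ still admits the scalar representation (\ref{2Dscawaverepre}). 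As in the two‑dimensional case one checks, using $|\theta_\xi^{\mathrm{inc}}|<\pi/2$, that the incident potentials $\phi_p^{\mathrm{inc}}$ and $\bs\phi_s^{\mathrm{inc}}$ satisfy the corresponding vanishing identities for $x\in\Omega$; adding these to the representations of the scattered fields replaces $\phi_p^{\mathrm{sca}},\bs\phi_s^{\mathrm{sca}}$ by the total Cauchy data $(\phi_p^{\mathrm{tot}},\partial_\nu\phi_p^{\mathrm{tot}})$ and $(\nu\times\bs\phi_s^{\mathrm{tot}},\nu\times\nabla\times\bs\phi_s^{\mathrm{tot}})$ on $\Gamma$.

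Next I would insert the boundary conditions of (\ref{3DHS1}). Using the surface identity $\nu\cdot(\nabla\times\bs v)=\mathrm{div}_\Gamma(\bs v\times\nu)$, the first condition reads $\partial_\nu\phi_p^{\mathrm{tot}}=-\mathrm{div}_\Gamma(\bs\phi_s^{\mathrm{tot}}\times\nu)$, and the second gives $\nu\times(\nabla\times\bs\phi_s^{\mathrm{tot}})=-\bs\curl_\Gamma\phi_p^{\mathrm{tot}}$ together with $\nu\times\bs\phi_s^{\mathrm{tot}}=-(\bs\phi_s^{\mathrm{tot}}\times\nu)$. Substituting, the scalar formula becomes $\phi_p^{\mathrm{sca}}=-\mathcal S_p^\Gamma[\mathrm{div}_\Gamma(\bs\phi_s^{\mathrm{tot}}\times\nu)]-\mathcal D_p^\Gamma[\phi_p^{\mathrm{tot}}]$, the exact 3D analogue of (\ref{2DscawaveP}). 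In the vector formula the double‑curl term now acts on the vector single‑layer potential with density $\bs\curl_\Gamma\phi_p^{\mathrm{tot}}$; since a surface curl is surface divergence free, $\mathrm{div}_\Gamma\bs\curl_\Gamma=0$, so integration by parts on $\Gamma$ shows this potential is divergence free, and hence $\nabla\times\nabla\times$ reduces to multiplication by $k_s^2$ on it and the factor $1/k_s^2$ cancels. One is left with
\[
\bs\phi_s^{\mathrm{sca}}(x)=\nabla_x\times\mathcal S_s^\Gamma[\bs\phi_s^{\mathrm{tot}}\times\nu](x)+\mathcal S_s^\Gamma[\bs\curl_\Gamma\phi_p^{\mathrm{tot}}](x),\qquad x\in\Omega,
\]
the single‑layer potentials acting componentwise on the vector densities.

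Finally I would let $x\to\Gamma$ from inside $\Omega$. For $\phi_p$ this reproduces the computation of the preceding lemma and yields the first row of (\ref{3DBIE}). For the vector equation I would apply the tangential trace $\nu\times(\cdot)$: the weakly singular term $\mathcal S_s^\Gamma[\bs\curl_\Gamma\phi_p^{\mathrm{tot}}]$ is continuous and contributes $\nu\times V_s^\Gamma[\bs\curl_\Gamma\phi_p^{\mathrm{tot}}]$, while the jump relation for $\nu_x\times\nabla_x\times\mathcal S_s^\Gamma[\,\cdot\,]$ produces $-\tfrac12(\bs\phi_s^{\mathrm{tot}}\times\nu)$ plus the principal‑value integral $\int_\Gamma\nu_x\times\big(\nabla_x G_{k_s}(x,y)\times(\bs\phi_s^{\mathrm{tot}}\times\nu)(y)\big)ds_y$; expanding this triple product and using $\nu_y\cdot\big(\bs\phi_s^{\mathrm{tot}}(y)\times\nu_y\big)=0$ to replace $\nu_x$ by $\nu_x-\nu_y$ in the first term identifies the principal value with $M_s^\Gamma[\bs\phi_s^{\mathrm{tot}}\times\nu]$. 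Collecting the terms and using $\nu\times\bs\phi_s^{\mathrm{sca}}=-(\bs\phi_s^{\mathrm{tot}}\times\nu)+(\bs\phi_s^{\mathrm{inc}}\times\nu)$ on $\Gamma$ gives the second row of (\ref{3DBIE}).

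The main obstacle is precisely this last step: producing the magnetic dipole operator in the stated regularized form and, above all, keeping every sign convention consistent — the particular form of the Stratton--Chu formula appropriate to the outward normal of $\Omega$, the orientation‑dependent $\pm\tfrac12$ jump of the operator $\nu\times\nabla\times(\text{single layer})$ evaluated from the $\Omega$ side, and the signs carried over from the boundary conditions — so that the identity operator emerges with coefficient exactly $\tfrac12$ and the off‑diagonal operators carry the signs displayed in $\mathbb T_\Gamma$. Two lesser technical points, handled exactly as in the two‑dimensional argument under the standing decay assumptions, are the justification of the vanishing identity for the incident Maxwell field and of the surface integration by parts on the unbounded interface $\Gamma$.
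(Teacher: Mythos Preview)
Your proposal is correct and follows essentially the same route as the paper: Stratton--Chu for $\bs\phi_s^{\mathrm{sca}}$, the vanishing identity for the incident potentials, substitution of the coupled boundary conditions, reduction of the double--curl term via $\mathrm{div}_\Gamma\bs\curl_\Gamma=0$, and then the tangential trace together with the BAC--CAB expansion and the $\nu_x\to\nu_x-\nu_y$ trick to obtain the regularized magnetic dipole operator. The only cosmetic differences are that the paper derives $\nu\cdot(\nabla\times\bs v)=\mathrm{div}_\Gamma(\bs v\times\nu)$ by Stokes plus integration by parts rather than invoking it, and reduces $\mathcal N_s^\Gamma[\bs\curl_\Gamma\phi_p^{\mathrm{tot}}]$ to $\mathcal S_s^\Gamma[\bs\curl_\Gamma\phi_p^{\mathrm{tot}}]$ via $\nabla\times\nabla\times=\nabla\nabla\cdot-\Delta$ after taking $\nu_x\times$, whereas you argue the vector single layer is divergence free and hence $\nabla\times\nabla\times=k_s^2$ on it before tracing; these are the same computation.
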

\begin{remark}
The derived boundary integral equations (\ref{2DBIE}) and (\ref{3DBIE}) for two and three dimensional problems, respectively, take analogous forms. Therefore, for simplicity, the same notations $\mathbb{T}_\Gamma$ (and the following $\mathbb{T}_\Gamma^w, \mathbb{T}_\Pi, \mathbb{T}_\Pi^w$ in the next section) are used to avoid miscellaneous definitions of integral operators.
\end{remark}
\begin{proof}
On one hand, following the two-dimensional discussion and utilizing the boundary conditions in (\ref{3DHS1}), it holds that
\be
\label{3DscawaveP}
\phi_p^\mathrm{sca}(x) &=& -\mathcal{S}_p^\Gamma[\nu\cdot\nabla\times\phi_s^\mathrm{tot}](x)-\mathcal{D}_p^\Gamma[\phi_p^\mathrm{tot}](x),\quad x\in\Omega,
\en
and the following BIE results:
\be
\label{3DBIE1}
\frac{1}{2}\phi_p^\mathrm{tot}+K_p^\Gamma[\phi_p^\mathrm{tot}]+V_p^\Gamma[\nu\cdot\nabla\times\bs\phi_s^\mathrm{tot}] &=& \phi_p^\mathrm{inc} \quad\mbox{on}\quad\Gamma.
\en
Applying the Stokes theorem gives
\begin{align*}
V_p^\Gamma[\nu\cdot\nabla\times\bs\phi_s^\mathrm{tot}]&=\int_\Gamma G_{k_p}(x,y)\nu_y\cdot\left(\nabla_y\times\bs\phi_s^\mathrm{tot}(y)\right)ds_y\nonumber\\
&= -\int_\Gamma\nu_y\cdot \left(\nabla_yG_{k_p}(x,y)\times\bs\phi_s^\mathrm{tot}(y)\right) ds_y\nonumber\\
&= -\int_\Gamma \nabla_yG_{k_p}(x,y)\cdot \left(\bs\phi_s^\mathrm{tot}(y)\times\nu_y \right)ds_y\nonumber\\
&= -\int_\Gamma \nabla_{\Gamma,y}G_{k_p}(x,y)\cdot \left(\bs\phi_s^\mathrm{tot}(y)\times\nu_y \right)ds_y\nonumber.
\end{align*}
In view of the integration-by-part formula
\ben
\int_\Gamma \nabla_\Gamma\varphi_1 \cdot\bs \varphi_2 ds=- \int_\Gamma \varphi_1\mathrm{div}_\Gamma\bs \varphi_2 ds
\enn
for a scaler field $\varphi_1$ and a vector field $\bs \varphi_2$, we can obtain that
\ben
V_p^\Gamma[\nu\cdot\nabla\times\bs\phi_s^\mathrm{tot}]=
V_p^\Gamma[\mathrm{div}_\Gamma(\bs\phi_s^\mathrm{tot}\times\nu)].
\enn
Therefore, (\ref{3DBIE1}) is equivalent to
\be
\label{3DBIE1-equ}
\frac{1}{2}\phi_p^\mathrm{tot}+K_p^\Gamma[\phi_p^\mathrm{tot}]+ V_p^\Gamma[\mathrm{div}_\Gamma(\bs\phi_s^\mathrm{tot}\times\nu)] &=& \phi_p^\mathrm{inc} \quad\mbox{on}\quad\Gamma.
\en

On the other hand, from the Stratton-Chu integral representation formulae \cite[(9.8)]{P03} and the ideas put forth in \cite{DM98}, it can be shown that
\be
\label{3Dscawaverepre}
\bs\phi_s^\mathrm{sca}(x)=-\mathcal{M}^\Gamma_s[\nu\times \bs\phi_s^\mathrm{sca}](x) -\mathcal{N}^\Gamma_s[\nu\times \nabla\times\bs\phi_s^\mathrm{sca}](x),\quad x\in\Omega,
\en
where the layer potentials $\mathcal{M}_s^\Gamma, \mathcal{N}_s^\Gamma$ are defined by
\ben
&& \mathcal{M}^\Gamma_s[\bs \phi](x)= \nabla_x\times\int_\Gamma G_{k_s}(x,y)\bs\phi(y)ds_y,\quad x\notin\Gamma,\\
&& \mathcal{N}^\Gamma_s[\bs \phi](x)= \frac{1}{k_s^2}\nabla_x\times\nabla_x\times\int_\Gamma G_{k_s}(x,y)\bs\phi(y)ds_y,\quad x\notin\Gamma.
\enn
Additionally, the incident field $\bs\phi_s^\mathrm{inc}$ satisfies
\be
\label{3Dincwaverepre}
0=-\mathcal{M}^\Gamma_s[\nu\times \bs\phi_s^\mathrm{inc}](x) -\mathcal{N}^\Gamma_s[\nu\times \nabla\times\bs\phi_s^\mathrm{inc}](x),\quad x\in\Omega.
\en
Then combining (\ref{3Dscawaverepre})-(\ref{3Dincwaverepre}) and
applying the boundary conditions in (\ref{3DHS1}) yield
\be
\label{3DscawaveS}
\bs\phi_s^\mathrm{sca}(x)= \mathcal{M}^\Gamma_s[\bs\phi_s^\mathrm{tot}\times\nu](x) + \mathcal{N}^\Gamma_s[\bs\curl_\Gamma\bs\phi_p^\mathrm{tot}](x),\quad x\in\Omega,
\en
which further leads to
\be
\label{3DscawaveS1}
\bs\phi_s^\mathrm{sca}(z)\times\nu_x= -\nu_x\times \mathcal{M}^\Gamma_s[\bs\phi_s^\mathrm{tot}\times\nu](z) -\nu_x\times \mathcal{N}^\Gamma_s[\bs\curl_\Gamma\bs\phi_p^\mathrm{tot}](z),
\en
by operating $\times\nu_x$ on both sides of (\ref{3DscawaveS}) for $x\in\Gamma$ and $z=x-h\nu_x\in\Omega$. It follows that for vector tangential fields $\bs v$ (i.e., $\nu\cdot\bs v=0$),
\ben
\nu_x\times \mathcal{M}^\Gamma_s[\bs v](z) &=& \nu_x\times \nabla_z\times\int_\Gamma G_{k_s}(z,y)\bs v(y)ds_y \\
&=& \int_\Gamma \nu_x\times (\nabla_zG_{k_s}(z,y)\times \bs v(y))ds_y\\
&=& \int_\Gamma \left[\nabla_zG_{k_s}(z,y)\nu_x^\top\bs v(y) -(\nu_x\cdot \nabla_zG_{k_s}(z,y))\bs v(y) \right] ds_y\\
&=& \int_\Gamma \left[\nabla_zG_{k_s}(z,y)(\nu_x^\top-\nu_y^\top)\bs v(y) -(\nu_x\cdot \nabla_zG_{k_s}(z,y))\bs v(y) \right] ds_y,
\enn
and, in view of the identity $\nabla\times\nabla\times \bs w=\nabla\nabla\cdot\bs w-\Delta\bs w$,
\ben
\nu_x\times \mathcal{N}^\Gamma_s[\bs v](z) &=& \frac{1}{k_s^2}\nu_x\times\nabla_z\times\nabla_z\times\int_\Gamma G_{k_s}(z,y)\bs v(y)ds_y\\
&=& \frac{1}{k_s^2}\nu_x\times\nabla_z\int_\Gamma \nabla_zG_{k_s}(z,y)\cdot\bs v(y)ds_y +\nu_x\times \mathcal{S}_s^\Gamma[\bs v](z)  \\
&=& -\frac{1}{k_s^2}\nu_x\times\nabla_z\int_\Gamma \nabla_yG_{k_s}(z,y)\cdot\bs v(y)ds_y +\nu_x\times \mathcal{S}_s^\Gamma[\bs v](z) \\
&=& \frac{1}{k_s^2}\nu_x\times\nabla_z \int_\Gamma G_{k_s}(z,y) \mathrm{div}_\Gamma\bs v(y)ds_y +\nu_x\times \mathcal{S}_s^\Gamma[\bs v](z).
\enn
Noting that $\mathrm{div}_\Gamma \bs\curl_\Gamma=0$ \cite[Theorem 2.5.19]{N01}, we have
\ben
\nu_x\times \mathcal{N}^\Gamma_s[\bs\curl_\Gamma\bs\phi_p^\mathrm{tot}](z)= \nu_x\times \mathcal{S}_s^\Gamma[\bs\curl_\Gamma\bs\phi_p^\mathrm{tot}](z).
\enn
Taking the limit as $h\rightarrow 0^+$ ($z\rightarrow x$) in (\ref{3DscawaveS1}) and using the well-known jump relations associated with the acoustic layer potentials~\cite{CK13}, we obtain
\begin{align}
\label{3DBIE2}
\frac{1}{2}\bs\phi_s^\mathrm{tot}\times\nu+
M_s^\Gamma[\bs\phi_s^\mathrm{tot}\times\nu] + \nu\times V_s^\Gamma[\bs\curl_\Gamma\phi_p^\mathrm{tot}]= \bs\phi_s^\mathrm{inc}\times\nu \quad\mbox{on}\quad\Gamma.
\end{align}
Incorporating (\ref{3DBIE1}) and (\ref{3DBIE2}) leads to (\ref{3DBIE}) and the proof is completed.
\end{proof}

\section{HD-WGF method}
\label{sec:4}

The derived boundary integral equations (\ref{2DBIE}) and (\ref{3DBIE}) are defined on the unbounded surface $\Gamma=\Pi\cup \pa D$, and thus, their numerical discretization requires appropriate truncation of $\Gamma$. Utilizing the WGF strategy proposed in~\cite{P16} for acoustic layered-media scattering, this section proposes new HD-WGF methods for solving the elastic scattering problems on a half-space. The methods rely on an effective domain truncation strategy using a smooth ``slow-rise'' windowing function $w_A$ which vanishes outside an interval of length $2A$, equals one in a region around the origin which grows linearly with $A$, and has a slow rise: all of its derivatives {tend} to zero uniformly as $A\rightarrow +\infty$. For example, as used in this work, we can define $w_A$ as
\ben
w_A(t)=\eta(t/A;c,1),
\enn
where $0<c<1$ and
\ben
\eta(t;t_0,t_1)=\begin{cases} 1,
  & |t|\le t_0, \cr e^{\frac{2e^{-1/u}}{u-1}}, & t_0<|t|<t_1,
  u=\frac{|t|-t_0}{t_1-t_0}, \cr 0, & |t|\ge t_1.
\end{cases}
\enn
An alternative choice of the windowing function can be found in \cite{LGO18}. To introduce our WGF methods, we define the windowing operator $\sqcap_A$ as $\sqcap_A[\phi]:=\widetilde{w}_A\phi$ where
\ben
\widetilde{w}_A(x)=\begin{cases}
w_A(x_1) , & \mbox{in}\quad 2D,\cr
w_A(x_1)w_A(x_2), & \mbox{in}\quad 3D,
\end{cases}
\enn
and denote by $\Gamma_A=\Gamma\cap\{\widetilde{w}_A\ne 0\}$ the windowed {finite} part of the boundary $\Gamma$.

\subsection{Two-dimensional case}
\label{sec:4.1}

Utilizing the windowing operator $\sqcap_A$, it is known that the boundary integral equation (\ref{2DBIE}) is equivalent to
\be
\label{2DWIEM}
\left(\frac{\mathbb{I}}{2}+\mathbb{T}^w_\Gamma\right)\begin{bmatrix}
\phi_p^\mathrm{tot}\\
\phi_s^\mathrm{tot}
\end{bmatrix}=\begin{bmatrix}
\phi_p^\mathrm{inc}\\
\phi_s^\mathrm{inc}
\end{bmatrix}+\left(\mathbb{T}^w_\Gamma-\mathbb{T}_\Gamma\right)\begin{bmatrix}
\phi_p^\mathrm{tot}\\
\phi_s^\mathrm{tot}
\end{bmatrix}\quad\mbox{on}\quad\Gamma,
\en
where
\ben
\mathbb{T}^w_\Gamma=\begin{bmatrix}
K_p^\Gamma\sqcap_A & V_p^\Gamma\pa_\tau \sqcap_A \\
-V_s^\Gamma\pa_\tau \sqcap_A & K_s^\Gamma\sqcap_A
\end{bmatrix}.
\enn
Then eliminating the second term on the right-hand side of (\ref{2DWIEM}) gives a simple windowed version of (\ref{2DBIE}), i.e.,
\be
\label{2DWIE}
\left(\frac{\mathbb{I}}{2}+\mathbb{T}^w_\Gamma\right)\begin{bmatrix}
\widehat\phi_p^\mathrm{tot}\\
\widehat\phi_s^\mathrm{tot}
\end{bmatrix}=\begin{bmatrix}
\phi_p^\mathrm{inc}\\
\phi_s^\mathrm{inc}
\end{bmatrix} \quad\mbox{on}\quad\Gamma_A,
\en
and the following windowed solution representations can be utilized
\be
\label{2DscawavePDW}
\phi_p^\mathrm{sca}(x) &=& -\mathcal{S}_p^\Gamma[\partial_\tau\sqcap_A\widehat\phi_s^\mathrm{tot}](x)-\mathcal{D}_p^\Gamma[\sqcap_A\widehat\phi_p^\mathrm{tot}](x),\quad x\in\Omega,\\
\label{2DscawaveSDW}
\phi_s^\mathrm{sca}(x) &=& \mathcal{S}_s^\Gamma[\partial_\tau\sqcap_A\widehat\phi_p^\mathrm{tot}](x)-\mathcal{D}_s^\Gamma[\sqcap_A\widehat\phi_s^\mathrm{tot}](x),\quad x\in\Omega.
\en
However, analogous to the illustration in \cite{BLPT16,BY21}, this direct windowing approach will not be uniformly accurate with respect to the angle of incidence. It can be expected that the approximation resulting from (\ref{2DscawavePDW}) and (\ref{2DscawaveSDW}) require, for a given accuracy, increasingly large truncated domains, i.e., large $A$, as grazing incidence is approached ($\theta_\xi^\mathrm{inc}\rightarrow \pi/2, \xi=p,s$). As noted in \cite{BLPT16}, this difficulty can be explained by consideration of certain arguments concerning bouncing geometrical optics rays and the method of stationary phase.

Inspired by the ``corrected'' formulations discussed in \cite{BLPT16} for acoustic layer problems, we now propose a new ``corrected'' windowed BIE and the associated solution representations for the two-dimensional elastic scattering problems on a half-space. On $\Gamma=\Pi\cup\partial D$, we introduce the operators $\mathbb{T}_\Pi$ and $\mathbb{T}^w_\Pi$ defined as follows:
\ben
\mathbb{T}_\Pi=\begin{bmatrix}
K_p^\Pi & V_p^\Pi\pa_\tau  \\
-V_s^\Pi\pa_\tau & K_s^\Pi
\end{bmatrix},\quad \mathbb{T}^w_\Pi=\begin{bmatrix}
K_p^\Pi\sqcap_A & V_p^\Pi\pa_\tau \sqcap_A \\
-V_s^\Pi\pa_\tau \sqcap_A & K_s^\Pi\sqcap_A
\end{bmatrix}\quad\mbox{on}\quad \Pi,
\enn
and
\ben
\mathbb{T}_\Pi=\begin{bmatrix}
\mathcal{D}_p^\Pi & \mathcal{S}_p^\Pi\pa_\tau  \\
-\mathcal{S}_s^\Pi\pa_\tau & \mathcal{D}_s^\Pi
\end{bmatrix},\quad \mathbb{T}^w_\Pi=\begin{bmatrix}
\mathcal{D}_p^\Pi\sqcap_A & \mathcal{S}_p^\Pi\pa_\tau \sqcap_A \\
-\mathcal{S}_s^\Pi\pa_\tau \sqcap_A & \mathcal{D}_s^\Pi\sqcap_A
\end{bmatrix}\quad\mbox{on}\quad \partial D.
\enn
Then (\ref{2DWIEM}) is equivalent to
\be
\label{2DWIEM1}
\left(\frac{\mathbb{I}}{2}+\mathbb{T}^w\right)\begin{bmatrix}
\phi_p^\mathrm{tot}\\
\phi_s^\mathrm{tot}
\end{bmatrix}=\begin{bmatrix}
\phi_p^\mathrm{inc}\\
\phi_s^\mathrm{inc}
\end{bmatrix}+\left(\mathbb{T}^w_\Pi-\mathbb{T}_\Pi\right)\begin{bmatrix}
\phi_p^\mathrm{tot}\\
\phi_s^\mathrm{tot}
\end{bmatrix}\quad\mbox{on}\quad\Gamma.
\en
Note that the second term on the right-hand side only depends on the values of $\phi_p^\mathrm{tot}, \phi_s^\mathrm{tot}$ on $\{x\in\Pi: \widetilde{w}_A(x)\ne 1\}$ which can be approximated by the solutions, denoted by $\phi_{p,\Pi}^\mathrm{tot}, \phi_{s,\Pi}^\mathrm{tot}$ (see Appendix~\ref{sec:A}), of the problem of elastic scattering by (only) the flat infinite surface $\Pi$ for sufficiently large $A$ from the view point of the physical concept underlying the scattering of plane incidence. Thus, replacing $\phi_p^\mathrm{tot}, \phi_s^\mathrm{tot}$ on the right hand side of (\ref{2DWIEM1}) by $\phi_{p,\Pi}^\mathrm{tot}, \phi_{s,\Pi}^\mathrm{tot}$ we arrive at the following ``corrected'' windowed BIE for the two-dimensional elastic scattering problem on a half-space:
\be
\label{2DWIEM-lim}
\left(\frac{\mathbb{I}}{2}+\mathbb{T}^w\right)\begin{bmatrix}
\widetilde\phi_p^\mathrm{tot}\\
\widetilde\phi_s^\mathrm{tot}
\end{bmatrix}=\begin{bmatrix}
\phi_p^\mathrm{inc}\\
\phi_s^\mathrm{inc}
\end{bmatrix}+\left(\mathbb{T}^w_\Pi-\mathbb{T}_\Pi\right)\begin{bmatrix}
\phi_{p,\Pi}^\mathrm{tot}\\
\phi_{s,\Pi}^\mathrm{tot}
\end{bmatrix}\quad\mbox{on}\quad\Gamma_A.
\en

\begin{lemma}
\label{lemma4.1}
It holds that
\be
\label{closedform2D1}
\mathbb{T}_{\Pi}\begin{bmatrix}
\phi_{p,\Pi}^\mathrm{tot}\\
\phi_{s,\Pi}^\mathrm{tot}
\end{bmatrix}=-\begin{bmatrix}
\phi_{p,\Pi}^\mathrm{sca}\\
\phi_{s,\Pi}^\mathrm{sca}
\end{bmatrix}\quad\mbox{on}\quad \pa D,
\en
and
\be
\label{closedform2D2}
\mathbb{T}_{\Pi}\begin{bmatrix}
\phi_{p,\Pi}^\mathrm{tot}\\
\phi_{s,\Pi}^\mathrm{tot}
\end{bmatrix}=\begin{bmatrix}
\phi_p^\mathrm{inc}-\frac{1}{2}\phi_{p,\Pi}^\mathrm{tot}\\
\phi_s^\mathrm{inc}-\frac{1}{2}\phi_{s,\Pi}^\mathrm{tot}
\end{bmatrix}\quad\mbox{on}\quad\Pi.
\en
\end{lemma}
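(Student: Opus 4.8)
The key observation is that $\phi_{p,\Pi}^{\mathrm{tot}}$ and $\phi_{s,\Pi}^{\mathrm{tot}}$ are, by construction (see Appendix~\ref{sec:A}), the total Helmholtz potentials of the \emph{flat-surface} scattering problem: the problem obtained from (\ref{2DHS1}) by discarding the obstacle $D$ and replacing $\Gamma$ and $\Omega$ by the flat interface $\Pi$ and the half-space $U_0=\{\bs x\in\R^2:x_2>0\}$, with the same plane-wave data $\phi_p^{\mathrm{inc}},\phi_s^{\mathrm{inc}}$. The plan is to transcribe the proof of the Lemma in Section~\ref{sec:3.1} to this special geometry and then read off (\ref{closedform2D1}) and (\ref{closedform2D2}) as two specializations of the resulting identities.

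First I would record that the whole chain of reasoning from (\ref{2Dscawaverepre})--(\ref{2Dincwaverepre}) through (\ref{2DscawaveP})--(\ref{2DscawaveS}) to (\ref{2DBIE1})--(\ref{2DBIE2}) applies verbatim under $\Gamma\mapsto\Pi$, $\Omega\mapsto U_0$, using the same jump relations for the acoustic layer potentials. This yields, on one hand, the interior representations $\phi_{p,\Pi}^{\mathrm{sca}}=-\mathcal{S}_p^\Pi[\pa_\tau\phi_{s,\Pi}^{\mathrm{tot}}]-\mathcal{D}_p^\Pi[\phi_{p,\Pi}^{\mathrm{tot}}]$ and $\phi_{s,\Pi}^{\mathrm{sca}}=\mathcal{S}_s^\Pi[\pa_\tau\phi_{p,\Pi}^{\mathrm{tot}}]-\mathcal{D}_s^\Pi[\phi_{s,\Pi}^{\mathrm{tot}}]$ valid throughout $U_0$, and, on the other hand, the boundary integral equation $\left(\tfrac{\mathbb{I}}{2}+\mathbb{T}_\Pi\right)[\phi_{p,\Pi}^{\mathrm{tot}},\,\phi_{s,\Pi}^{\mathrm{tot}}]^{\top}=[\phi_p^{\mathrm{inc}},\,\phi_s^{\mathrm{inc}}]^{\top}$ on $\Pi$ with $\mathbb{T}_\Pi$ in its on-surface (principal-value) form. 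Given these, (\ref{closedform2D1}) follows by evaluating the interior representations at $\bs x\in\pa D$: since $\pa D$ lies in $U_0$ and is disjoint from $\Pi$ this is legitimate pointwise, and there the layer \emph{potentials} $\mathcal{S}_\xi^\Pi,\mathcal{D}_\xi^\Pi$ are precisely the entries of $\mathbb{T}_\Pi$ in its definition on $\pa D$, so reading off the two rows gives $\mathbb{T}_\Pi[\phi_{p,\Pi}^{\mathrm{tot}},\phi_{s,\Pi}^{\mathrm{tot}}]^{\top}=-[\phi_{p,\Pi}^{\mathrm{sca}},\phi_{s,\Pi}^{\mathrm{sca}}]^{\top}$ on $\pa D$. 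And (\ref{closedform2D2}) is obtained simply by moving $\tfrac12[\phi_{p,\Pi}^{\mathrm{tot}},\phi_{s,\Pi}^{\mathrm{tot}}]^{\top}$ to the right-hand side of the above boundary integral equation, where $\mathbb{T}_\Pi$ now stands for its definition on $\Pi$, consistent with the jump relations used to derive that equation.

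The one step that is not pure bookkeeping, and which I expect to be the main obstacle, is justifying that the transcription is actually legitimate, i.e. that the explicit field $\phi_{\xi,\Pi}^{\mathrm{tot}}$ constructed in Appendix~\ref{sec:A} satisfies the hypotheses the Section~\ref{sec:3.1} derivation requires: the coupled Helmholtz system in $U_0$, the two Dirichlet-type conditions on $\Pi$, and the upward propagating radiation condition. The subtlety is concentrated in the last point, since the ``scattered'' part of the flat-surface field is not a decaying field but a finite sum of mode-converted reflected waves of the form $e^{i(\alpha x_1+\beta x_2)}$; one must check that each such constituent has $\real\beta\ge0$ and $\Ima\beta\ge0$, hence is genuinely upward propagating, so that the Green's-representation argument of \cite{DM97} leading to (\ref{2Dscawaverepre}) stays valid and the companion identity $0=\mathcal{S}_\xi^\Pi[\pa_\nu\phi_\xi^{\mathrm{inc}}]-\mathcal{D}_\xi^\Pi[\phi_\xi^{\mathrm{inc}}]$ in $U_0$ (the analogue of (\ref{2Dincwaverepre})) continues to hold. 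Both are direct consequences of the closed form of $\phi_{\xi,\Pi}^{\mathrm{tot}}$, after which (\ref{closedform2D1})--(\ref{closedform2D2}) follow.
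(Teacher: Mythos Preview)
Your proposal is correct and follows essentially the same route as the paper: both obtain the interior representations (\ref{lemma4.1-1})--(\ref{lemma4.1-2}) by transcribing the Section~\ref{sec:3.1} argument to the flat-surface problem, then read off (\ref{closedform2D1}) by evaluating at $x\in\pa D\subset U_0$ and (\ref{closedform2D2}) by passing to the limit $x\to\Pi$. Your added remark that one must verify the reflected constituents $e^{i(\alpha x_1+\beta_{\xi_1,\xi_2}x_2)}$ are genuinely upward propagating (so that the Green's-representation step remains valid) is a useful clarification that the paper simply asserts without comment.
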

\begin{proof}
The proof of this lemma can be carried out by means of the solution representations associated with the problem of elastic scattering by the flat infinite surface. It is known that, for $\xi=p,s$,
\ben
0=&\mathcal{S}_\xi^\Pi[\partial_\nu\phi_\xi^\mathrm{inc}](x)-\mathcal{D}_\xi^\Pi[\phi_\xi^\mathrm{inc}](x)\quad\mbox{in}\quad\mathbb{R}^2_+,\\
\phi_{\xi,\Pi}^\mathrm{sca}=&\mathcal{S}_\xi^\Pi[\partial_\nu\phi_{\xi,\Pi}^\mathrm{sca}](x)-\mathcal{D}_\xi^\Pi[\phi_{\xi,\Pi}^\mathrm{sca}](x)\quad\mbox{in}\quad\mathbb{R}^2_+.
\enn
Therefore, the boundary conditions
\ben
\pa_\nu\phi_{p,\Pi}^\mathrm{tot}+\pa_\tau\phi_{s,\Pi}^\mathrm{tot}=0,\quad \pa_\tau\phi_{p,\Pi}^\mathrm{tot}-\pa_\nu\phi_{s,\Pi}^\mathrm{tot}=0\quad \mbox{on}\quad \Pi
\enn
implies that
\be
\label{lemma4.1-1}
\phi_{p,\Pi}^\mathrm{sca}=&-\mathcal{S}_p^\Pi[\partial_\tau\phi_{s,\Pi}^\mathrm{tot}](x)-\mathcal{D}_p^\Pi[\phi_{p,\Pi}^\mathrm{tot}](x)\quad\mbox{in}\quad\mathbb{R}^2_+.
\en
and
\be
\label{lemma4.1-2}
\phi_{s,\Pi}^\mathrm{sca}=&\mathcal{S}_s^\Pi[\partial_\tau\phi_{p,\Pi}^\mathrm{tot}](x)-\mathcal{D}_s^\Pi[\phi_{s,\Pi}^\mathrm{tot}](x)\quad\mbox{in}\quad\mathbb{R}^2_+,
\en
which give (\ref{closedform2D1}). Taking the limit $x\rightarrow\Pi$ in (\ref{lemma4.1-1}) and (\ref{lemma4.1-2}) will lead to (\ref{closedform2D2}) which completes the proof.
\end{proof}

\begin{remark}
\label{localdefect2D}
The more general consideration of the case that the unbounded rough surface $\Gamma$ consists both a local perturbation of the flat surface $\Pi$ and/or the boundary of bounded impenetrable obstacles in $U_0$ will not bring any significant challenge. In addition to Lemma~\ref{lemma4.1}, it is only necessary to treat the term $\mathbb{T}_{\Pi}[\phi_{p,\Pi}^\mathrm{tot};
\phi_{s,\Pi}^\mathrm{tot}]$ on $(\Gamma\backslash \Pi)\cap\R^d_-$ by means of the solution representations
\begin{align*}
-\phi_\xi^\mathrm{inc}=&\mathcal{S}_\xi^\Pi[\partial_\nu\phi_\xi^\mathrm{inc}](x)-\mathcal{D}_\xi^\Pi[\phi_\xi^\mathrm{inc}](x)\quad\mbox{in}\quad\mathbb{R}^d_-,\\
0=&\mathcal{S}_\xi^\Pi[\partial_\nu\phi_{\xi,\Pi}^\mathrm{sca}](x)-\mathcal{D}_\xi^\Pi[\phi_{\xi,\Pi}^\mathrm{sca}](x)\quad\mbox{in}\quad\mathbb{R}^d_-.
\end{align*}
Then it can be derived that
\ben
\mathbb{T}_{\Pi}\begin{bmatrix}
\phi_{p,\Pi}^\mathrm{tot}\\
\phi_{s,\Pi}^\mathrm{tot}
\end{bmatrix}=\begin{bmatrix}
\phi_p^\mathrm{inc}\\
\phi_s^\mathrm{inc}
\end{bmatrix}\quad\mbox{on}\quad (\Gamma\backslash \Pi)\cap\R^d_-.
\enn
\end{remark}

Taking (\ref{closedform2D1})-(\ref{closedform2D2}) into (\ref{2DWIEM-lim}), we finally arrive at the approximation of the solutions $\phi_\xi^\mathrm{tot},\xi=p,s$ on $\Gamma_A$ using $\widetilde\phi_\xi^\mathrm{tot},\xi=p,s$ which {satisfy} the boundary integral equation
\be
\label{2DWIEM-corrected}
\left(\frac{\mathbb{I}}{2}+\mathbb{T}^w\right)\begin{bmatrix}
\widetilde\phi_p^\mathrm{tot}\\
\widetilde\phi_s^\mathrm{tot}
\end{bmatrix}=\mathbb{T}_{\Pi}^w\begin{bmatrix}
\phi_{p,\Pi}^\mathrm{tot}\\
\phi_{s,\Pi}^\mathrm{tot}
\end{bmatrix}+\begin{cases}
[\phi_{p,\Pi}^\mathrm{tot}; \phi_{s,\Pi}^\mathrm{tot}] & \quad\mbox{on}\quad\partial D,\\
[\frac{1}{2}\phi_{p,\Pi}^\mathrm{tot}; \frac{1}{2}\phi_{s,\Pi}^\mathrm{tot}]  &\quad\mbox{on}\quad\Pi.\\
\end{cases}
\en
Recall the solution representations given in (\ref{2DscawaveP})-(\ref{2DscawaveS}):
\ben
\phi_p^\mathrm{sca}(x) &=& -\mathcal{S}_p^\Gamma[\partial_\tau\phi_s^\mathrm{tot}](x)-\mathcal{D}_p^\Gamma[\phi_p^\mathrm{tot}](x),\quad x\in\Omega,\\
\phi_s^\mathrm{sca}(x) &=& \mathcal{S}_s^\Gamma[\partial_\tau\phi_p^\mathrm{tot}](x)-\mathcal{D}_s^\Gamma[\phi_s^\mathrm{tot}](x),\quad x\in\Omega.
\enn
Noting that $\phi_\xi^\mathrm{tot}=\sqcap_A \phi_\xi^\mathrm{tot}+ (\mathbb{I}-\sqcap_A)\phi_\xi^\mathrm{tot}$, then the approximation strategies $\phi_\xi^\mathrm{tot}\approx \widetilde\phi_\xi^\mathrm{tot}$ on $\Gamma_A$ and $\phi_\xi^\mathrm{tot}\approx \phi_{\xi,\Pi}^\mathrm{tot}$ on $\{x\in\Pi: \widetilde{w}_A(x)\ne 1\}$, together with (\ref{lemma4.1-1})-(\ref{lemma4.1-2}), lead to the approximation of the solutions $\phi_p^\mathrm{sca},\xi=p,s$ in $\Omega$ as follows:
\be
\label{2DscawaveP-corrected}
\phi_p^\mathrm{sca}(x) &\approx& -\mathcal{S}_p^\Gamma[\partial_\tau \sqcap_A \widetilde\phi_s^\mathrm{tot}](x)-\mathcal{D}_p^\Gamma[\sqcap_A \widetilde\phi_p^\mathrm{tot}](x)\nonumber\\
&\quad& +\mathcal{S}_p^\Pi[\partial_\tau \sqcap_A \phi_{s,\Pi}^\mathrm{tot} ](x) +\mathcal{D}_p^\Pi[\sqcap_A \phi_{p,\Pi}^\mathrm{tot}](x)\nonumber\\
&\quad& +\phi_{p,\Pi}^\mathrm{sca}(x),\quad x\in\Omega,
\en
and
\be
\label{2DscawaveS-corrected}
\phi_s^\mathrm{sca}(x) &\approx& \mathcal{S}_s^\Gamma[\partial_\tau \sqcap_A \widetilde\phi_p^\mathrm{tot}](x)-\mathcal{D}_s^\Gamma[\sqcap_A \widetilde\phi_s^\mathrm{tot}](x)\nonumber\\
&\quad& -\mathcal{S}_s^\Pi[\partial_\tau \sqcap_A \phi_{p,\Pi}^\mathrm{tot} ](x) +\mathcal{D}_s^\Pi[\sqcap_A \phi_{s,\Pi}^\mathrm{tot}](x)\nonumber\\
&\quad& +\phi_{s,\Pi}^\mathrm{sca}(x),\quad x\in\Omega.
\en

\subsection{Three-dimensional case}
\label{sec:4.2}

The windowed boundary integral equation for three-dimensional problem can be generated analogous to the approach for two-dimensional problem. Now we denote
\ben
\mathbb{T}_\Gamma^w=\begin{bmatrix}
K_p^\Gamma\sqcap_A & V_p^\Gamma\mathrm{div}_\Gamma\sqcap_A \\
N_s^\Gamma\bs\curl_\Gamma\sqcap_A & M_s^\Gamma\sqcap_A
\end{bmatrix},
\enn
and introduce the operators $\mathbb{T}_\Pi$ and $\mathbb{T}_\Pi^w$ defined on $\Pi$ as follows:
\ben
\mathbb{T}_\Pi=\begin{bmatrix}
K_p^\Pi & V_p^\Pi\mathrm{div}_\Pi \\
N_s^\Pi\bs\curl_\Pi & M_s^\Pi
\end{bmatrix},\quad \mathbb{T}_\Pi^w =\begin{bmatrix}
K_p^\Pi\sqcap_A & V_p^\Pi\mathrm{div}_\Pi\sqcap_A \\
N_s^\Pi\bs\curl_\Pi\sqcap_A & M_s^\Pi\sqcap_A
\end{bmatrix}\quad \mbox{on}\quad\Pi,
\enn
and
\ben
\mathbb{T}_\Pi=\begin{bmatrix}
\mathcal{D}_p^\Pi & \mathcal{S}_p^\Pi\mathrm{div}_\Pi \\
\nu\times\mathcal{N}_s^\Pi\bs\curl_\Pi & \nu\times\mathcal{M}_s^\Pi
\end{bmatrix},\quad \mathbb{T}_\Pi^w =\begin{bmatrix}
\mathcal{D}_p^\Pi\sqcap_A & \mathcal{S}_p^\Pi\mathrm{div}_\Pi\sqcap_A \\
\nu\times\mathcal{N}_s^\Pi\bs\curl_\Pi\sqcap_A & \nu\times\mathcal{M}_s^\Pi\sqcap_A
\end{bmatrix}\quad \mbox{on}\quad\partial D.
\enn
Then the boundary integral equation (\ref{3DBIE}) is equivalent to
\begin{align}
\label{3DWIEM}
\left(\frac{\mathbb{I}}{2}+\mathbb{T}_\Gamma^w\right)\begin{bmatrix}
\phi_p^\mathrm{tot}\\
\bs\phi_s^\mathrm{tot}\times\nu
\end{bmatrix}&=\begin{bmatrix}
\phi_p^\mathrm{inc}\\
\bs\phi_s^\mathrm{inc}\times\nu
\end{bmatrix}+ (\mathbb{T}_\Gamma^w-\mathbb{T}_\Gamma)\begin{bmatrix}
\phi_p^\mathrm{tot}\\
\bs\phi_s^\mathrm{tot}\times\nu
\end{bmatrix}\nonumber\\
&=\begin{bmatrix}
\phi_p^\mathrm{inc}\\
\bs\phi_s^\mathrm{inc}\times\nu
\end{bmatrix}+ (\mathbb{T}_\Pi^w-\mathbb{T}_\Pi)\begin{bmatrix}
\phi_p^\mathrm{tot}\\
\bs\phi_s^\mathrm{tot}\times\nu
\end{bmatrix}
 \quad\mbox{on}\quad\Gamma.
\end{align}
Let $\phi_{p,\Pi}^\mathrm{tot}, \bs\phi_{s,\Pi}^\mathrm{tot}$ (see Appendix A) be the solutions of the problem of elastic scattering by (only) the flat infinite surface $\Pi$ in three dimensions. Following the correction idea in (\ref{2DWIEM-lim}) and replacing $\phi_p^\mathrm{tot}, \bs\phi_s^\mathrm{tot}$ on the right hand side of (\ref{3DWIEM}) by $\phi_{p,\Pi}^\mathrm{tot}, \bs\phi_{s,\Pi}^\mathrm{tot}$ we arrive at the following ``corrected'' windowed BIE for the three-dimensional elastic scattering problem on a half-space:
\begin{align}
\label{3DWIEM-lim}
\left(\frac{\mathbb{I}}{2}+\mathbb{T}_\Gamma^w\right)\begin{bmatrix}
\phi_p^\mathrm{tot}\\
\bs\phi_s^\mathrm{tot}\times\nu
\end{bmatrix} =\begin{bmatrix}
\phi_p^\mathrm{inc}\\
\bs\phi_s^\mathrm{inc}\times\nu
\end{bmatrix}+ (\mathbb{T}_\Pi^w-\mathbb{T}_\Pi)\begin{bmatrix}
\phi_{p,\Pi}^\mathrm{tot}\\
\bs\phi_{s,\Pi}^\mathrm{tot}\times\nu
\end{bmatrix}
 \quad\mbox{on}\quad\Gamma.
\end{align}

The following closed-form can be shown {analogous} to Lemma~\ref{lemma4.1} together with the solution representations of Maxwell equation and the proof is omitted.
\begin{lemma}
\label{lemma5.1}
It holds that
\be
\label{closedform3D1}
\mathbb{T}_{\Pi}\begin{bmatrix}
\phi_{p,\Pi}^\mathrm{tot}\\
\bs\phi_{s,\Pi}^\mathrm{tot}\times\nu
\end{bmatrix}=-\begin{bmatrix}
\phi_{p,\Pi}^\mathrm{sca}\\
\bs\phi_{s,\Pi}^\mathrm{sca}\times\nu
\end{bmatrix}\quad\mbox{on}\quad \pa D,
\en
and
\be
\label{closedform3D2}
\mathbb{T}_{\Pi}\begin{bmatrix}
\phi_{p,\Pi}^\mathrm{tot}\\
\bs\phi_{s,\Pi}^\mathrm{tot}\times\nu
\end{bmatrix}=\begin{bmatrix}
\phi_p^\mathrm{inc}-\frac{1}{2}\phi_{p,\Pi}^\mathrm{tot}\\
\bs\phi_s^\mathrm{inc}\times\nu-\frac{1}{2}\bs\phi_{s,\Pi}^\mathrm{tot}\times\nu
\end{bmatrix}\quad\mbox{on}\quad\Pi.
\en
\end{lemma}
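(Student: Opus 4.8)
The plan is to mirror the proof of Lemma~\ref{lemma4.1}, with the scalar Helmholtz Green's representations used there for the shear part replaced by the Stratton--Chu (Maxwell) layer-potential representations already used to derive~(\ref{3DBIE}). Write $\phi_{p,\Pi}^\mathrm{tot}=\phi_{p,\Pi}^\mathrm{sca}+\phi_p^\mathrm{inc}$ and $\bs\phi_{s,\Pi}^\mathrm{tot}=\bs\phi_{s,\Pi}^\mathrm{sca}+\bs\phi_s^\mathrm{inc}$ for the potentials of the flat-surface solution of Appendix~\ref{sec:A}: by construction $\phi_{p,\Pi}^\mathrm{sca}$ solves the Helmholtz equation and $\bs\phi_{s,\Pi}^\mathrm{sca}$ the Maxwell equation with $\nabla\cdot\bs\phi_{s,\Pi}^\mathrm{sca}=0$ in $\R^3_+$, and on $\Pi$ the pair satisfies the two boundary conditions obtained by restricting~(\ref{3DHS1}) to $\Pi$.

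First I would record the representations valid for $x\in\R^3_+$: the Green formula $\phi_{p,\Pi}^\mathrm{sca}(x)=\mathcal{S}_p^\Pi[\partial_\nu\phi_{p,\Pi}^\mathrm{sca}](x)-\mathcal{D}_p^\Pi[\phi_{p,\Pi}^\mathrm{sca}](x)$ together with $0=\mathcal{S}_p^\Pi[\partial_\nu\phi_p^\mathrm{inc}](x)-\mathcal{D}_p^\Pi[\phi_p^\mathrm{inc}](x)$ (valid because $|\theta_p^\mathrm{inc}|<\pi/2$, exactly as~(\ref{2Dincwaverepre})), and, for the shear part, the Stratton--Chu formula $\bs\phi_{s,\Pi}^\mathrm{sca}(x)=-\mathcal{M}_s^\Pi[\nu\times\bs\phi_{s,\Pi}^\mathrm{sca}](x)-\mathcal{N}_s^\Pi[\nu\times\nabla\times\bs\phi_{s,\Pi}^\mathrm{sca}](x)$ together with its incident counterpart $0=-\mathcal{M}_s^\Pi[\nu\times\bs\phi_s^\mathrm{inc}](x)-\mathcal{N}_s^\Pi[\nu\times\nabla\times\bs\phi_s^\mathrm{inc}](x)$, the $\Pi$-analogue of~(\ref{3Dincwaverepre}) (where $\nabla\cdot\bs\phi_s^\mathrm{inc}=0$ is used). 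Adding the scattered and incident identities converts these into representations in terms of the total potentials; then inserting the $\Pi$-boundary conditions $\partial_\nu\phi_{p,\Pi}^\mathrm{tot}=-\nu\cdot\nabla\times\bs\phi_{s,\Pi}^\mathrm{tot}$ and $\nu\times\nabla\times\bs\phi_{s,\Pi}^\mathrm{tot}=-\bs\curl_\Pi\phi_{p,\Pi}^\mathrm{tot}$, the Stokes identity $\mathcal{S}_p^\Pi[\nu\cdot\nabla\times\bs\phi]=\mathcal{S}_p^\Pi[\mathrm{div}_\Pi(\bs\phi\times\nu)]$ established in the derivation of~(\ref{3DBIE1-equ}), and the reduction $\nu_x\times\mathcal{N}_s^\Pi[\bs\curl_\Pi\phi_{p,\Pi}^\mathrm{tot}]=\nu_x\times\mathcal{S}_s^\Pi[\bs\curl_\Pi\phi_{p,\Pi}^\mathrm{tot}]$ coming from $\mathrm{div}_\Pi\bs\curl_\Pi=0$, I would obtain, for $x\in\R^3_+$,
\[
\phi_{p,\Pi}^\mathrm{sca}(x)=-\mathcal{S}_p^\Pi[\mathrm{div}_\Pi(\bs\phi_{s,\Pi}^\mathrm{tot}\times\nu)](x)-\mathcal{D}_p^\Pi[\phi_{p,\Pi}^\mathrm{tot}](x),
\]
\[
\bs\phi_{s,\Pi}^\mathrm{sca}(x)=\mathcal{M}_s^\Pi[\bs\phi_{s,\Pi}^\mathrm{tot}\times\nu](x)+\mathcal{N}_s^\Pi[\bs\curl_\Pi\phi_{p,\Pi}^\mathrm{tot}](x),
\]
the $\Pi$-analogues of~(\ref{lemma4.1-1})--(\ref{lemma4.1-2}) and of~(\ref{3DscawaveP}), (\ref{3DscawaveS}).

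To conclude~(\ref{closedform3D1}), note that $\pa D\subset\R^3_+$ is disjoint from $\Pi$, so evaluating the two displays at $x\in\pa D$ and applying $\nu_x\times$ to the second one — using $\nu_x\times\bs\phi_{s,\Pi}^\mathrm{sca}=-\bs\phi_{s,\Pi}^\mathrm{sca}\times\nu$ and $\nu_x\times\mathcal{N}_s^\Pi\bs\curl_\Pi=\nu_x\times\mathcal{S}_s^\Pi\bs\curl_\Pi$ — yields exactly $\mathbb{T}_\Pi[\phi_{p,\Pi}^\mathrm{tot};\bs\phi_{s,\Pi}^\mathrm{tot}\times\nu]=-[\phi_{p,\Pi}^\mathrm{sca};\bs\phi_{s,\Pi}^\mathrm{sca}\times\nu]$ on $\pa D$, in view of the definition of $\mathbb{T}_\Pi$ on $\pa D$. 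For~(\ref{closedform3D2}) I would let $x\to\Pi$ from $\R^3_+$ in the same two displays and use the jump relations of the acoustic layer potentials~\cite{CK13}: $\mathcal{S}_p^\Pi$ and $\nu_x\times\mathcal{S}_s^\Pi\bs\curl_\Pi$ pass to the limit continuously, $\mathcal{D}_p^\Pi[\varphi]\to K_p^\Pi[\varphi]-\tfrac12\varphi$, and, by the jump computed between~(\ref{3DscawaveS1}) and~(\ref{3DBIE2}), $\nu_x\times\mathcal{M}_s^\Pi[\bs v](z)\to M_s^\Pi[\bs v]-\tfrac12\bs v$ as $z\to x\in\Pi$ from $\R^3_+$. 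Substituting $\phi_{p,\Pi}^\mathrm{sca}|_\Pi=\phi_{p,\Pi}^\mathrm{tot}-\phi_p^\mathrm{inc}$ and $\bs\phi_{s,\Pi}^\mathrm{sca}|_\Pi\times\nu=\bs\phi_{s,\Pi}^\mathrm{tot}\times\nu-\bs\phi_s^\mathrm{inc}\times\nu$ and rearranging then gives~(\ref{closedform3D2}).

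The Green's-formula and Stokes manipulations above are routine, being copies of those in Sections~\ref{sec:3.2} and~\ref{sec:4.1}. I expect the only delicate point — the main obstacle — to be the bookkeeping of the $\tfrac12$-jump relations on the flat surface $\Pi$, especially for the magnetic-dipole-type operator $\nu_x\times\mathcal{M}_s^\Pi$ (whose kernel splits into a weakly singular part carried by the factor $\nu_x^\top-\nu_y^\top$ and an adjoint-double-layer part $\partial_{\nu_x}G_{k_s}(x,y)$ that supplies the jump) together with the consistent handling of the orientation of $\nu$ on $\Pi$. One should also verify, as in the derivation of~(\ref{3DBIE}), that the incident-wave layer representations over $\Pi$ really do vanish throughout $\R^3_+$ when $|\theta_\xi^\mathrm{inc}|<\pi/2$, $\xi=p,s$, and that $\nabla\cdot\bs\phi_{s,\Pi}^\mathrm{tot}=0$, so that the Stratton--Chu formula is legitimate.
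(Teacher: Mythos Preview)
Your proposal is correct and follows precisely the route the paper itself indicates: the paper omits the proof, stating only that it ``can be shown analogous to Lemma~\ref{lemma4.1} together with the solution representations of Maxwell equation,'' and your argument is exactly that --- the $\Pi$-versions of the Green and Stratton--Chu representations~(\ref{2Dincwaverepre}), (\ref{3Dscawaverepre}), (\ref{3Dincwaverepre}) combined with the boundary conditions on $\Pi$, then evaluated on $\partial D\subset\R^3_+$ for~(\ref{closedform3D1}) and passed to the limit on $\Pi$ via the jump relations for~(\ref{closedform3D2}). One minor remark: the reduction $\nu_x\times\mathcal{N}_s^\Pi\bs\curl_\Pi=\nu_x\times\mathcal{S}_s^\Pi\bs\curl_\Pi$ is only needed when taking the limit onto $\Pi$ (to identify the continuous operator $N_s^\Pi\bs\curl_\Pi$); on $\partial D$ the definition of $\mathbb{T}_\Pi$ already carries $\nu\times\mathcal{N}_s^\Pi\bs\curl_\Pi$, so no reduction is required there.
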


\section{Numerical experiments}
\label{sec:5}

In this section, we present several numerical experiments to demonstrate the efficiency and accuracy of the proposed HD-WGF method. Analogous to~\cite{BY21}, solutions for the integral equations were produced by means of the Chebyshev-based rectangular-polar discretization methodology~\cite{BG20,BY20} for the numerical evaluation of integral operators and the fully complex version of the iterative solver GMRES. The parameters in the numerical evaluation of the integral operators are selected such that the errors arising from the numerical integration are negligible in comparison with the smooth-windowing truncation errors. All of the numerical tests were obtained by means of Fortran numerical implementations, parallelized using OpenMP. In all cases, unless otherwise stated, the values $\lambda=2$, $\mu=1$,
$\rho=1$, $c=0.7$ were used and the relative errors reported
were calculated in accordance with the expression \be\label{rel_error}
\epsilon_\infty=\frac{\max_{x\in
    S}|u^{\mathrm{num}}(x)-u^{\mathrm{ref}}(x)|}{\max_{x\in
    S}|u^{\mathrm{ref}}(x)|}, \en where, denoting $u=(\phi_p^{\mathrm{sca}},\phi_s^{\mathrm{sca}})$ in 2D and $u=(\phi_p^{\mathrm{sca}},\bs\phi_s^{\mathrm{sca}})$ in 3D, $u^{\mathrm{num}}$ is the corresponding numerical solution and $u^{\mathrm{ref}}$ is produced by
means of numerical solution with a sufficiently fine discretization
and a sufficiently large value of $A$, and where $S$ is a suitably
selected line segment (2D) or square plane (3D) above the defect.

\begin{figure}[htbp]
\centering
\begin{tabular}{ccc}
\includegraphics[scale=0.15]{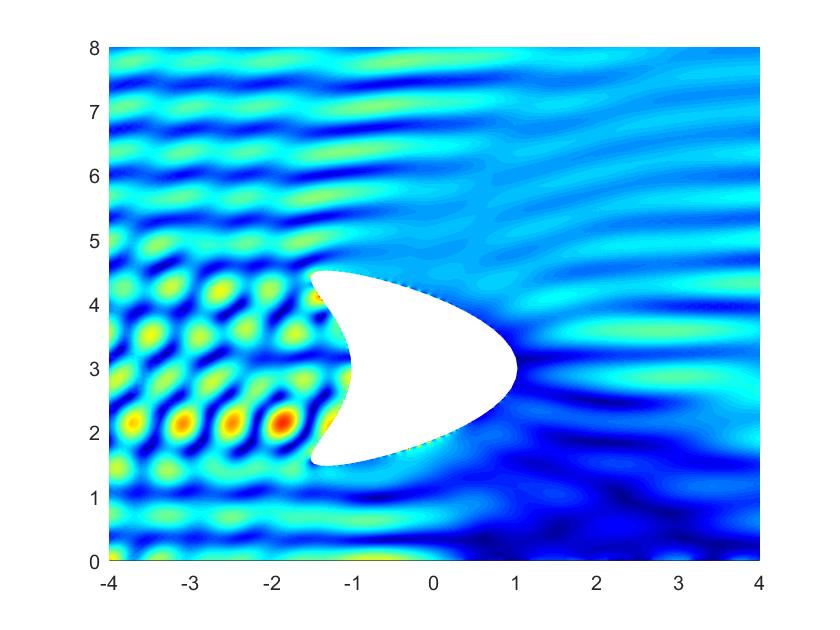} &
\includegraphics[scale=0.15]{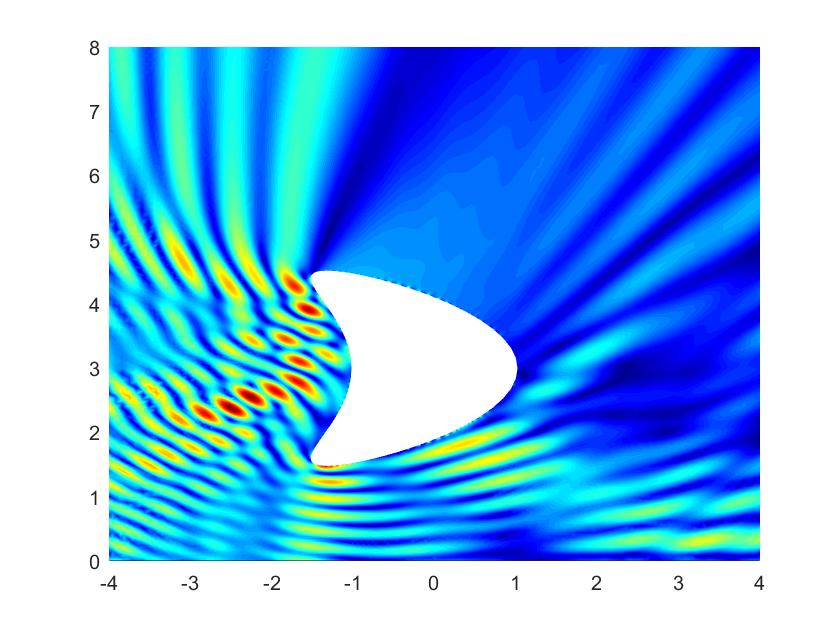} \\
(a) HD-WGF: $\phi_p^{\mathrm{tot}}$ & (b) HD-WGF: $\phi_s^{\mathrm{tot}}$ \\
\includegraphics[scale=0.15]{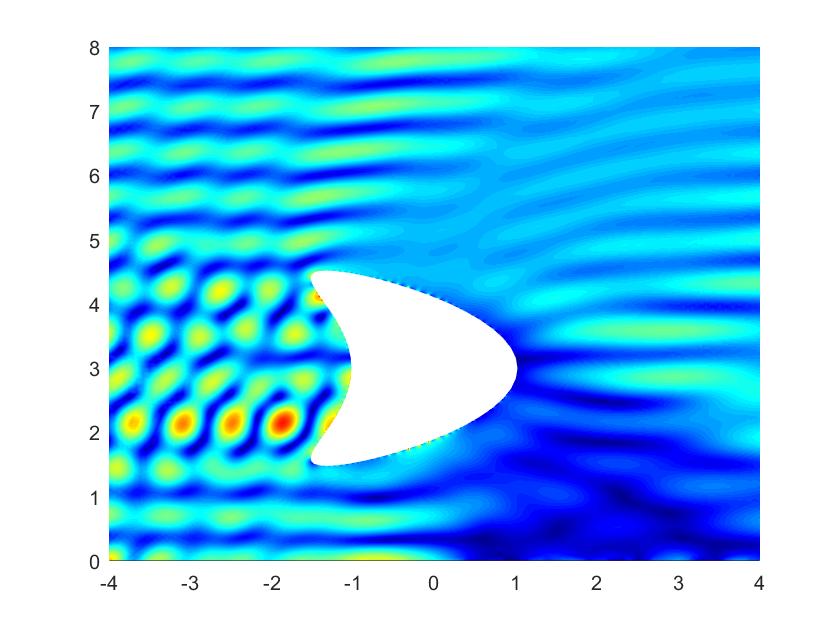} &
\includegraphics[scale=0.15]{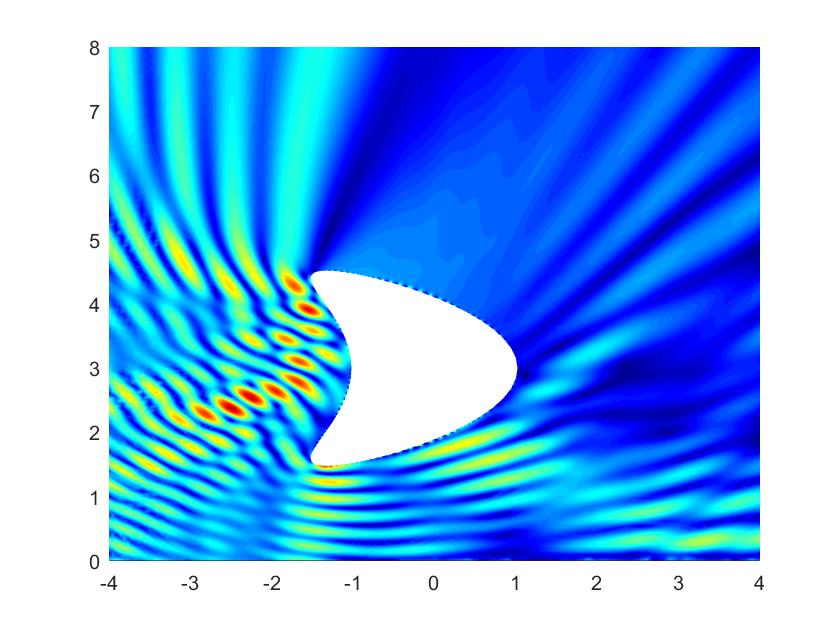} \\
(c)WGF: $\phi_p^{\mathrm{tot}}$ & (d)WGF: $\phi_s^{\mathrm{tot}}$ \\
\end{tabular}
\caption{Absolute values of the total fields $\phi_p^{\mathrm{tot}}$ and $\phi_s^{\mathrm{tot}}$ resulting from the WGF method proposed in~\cite{BY21} and the HD-WGF method for the problem of scattering by a kite-shaped obstacle where $\omega=4\pi$, $A=16\lambda_s$, $\theta_p^{\mathrm{inc}}=\frac{\pi}{4}$, $c_p=1$ and $c_s=0$.}
\label{Figure2D.1}
\end{figure}

\begin{table}[htbp]
\centering
\begin{tabular}{|c|c|c|c|c|c|c|}
\hline
\multicolumn{7}{|c|}{Disc-shaped} \\
\hline
\multirow{2}{*}{$A/\lambda_s$} & \multicolumn{3}{|c|}{$\theta_s^\mathrm{inc}=0$}  & \multicolumn{3}{|c|}{$\theta_s^\mathrm{inc}=-\frac{63\pi}{128}$}\\
\cline{2-7}
& $\theta_p^\mathrm{inc}=0$ & $\theta_p^\mathrm{inc}=\frac{\pi}{4}$ & $\theta_p^\mathrm{inc}=\frac{63\pi}{128}$
& $\theta_p^\mathrm{inc}=0$ & $\theta_p^\mathrm{inc}=\frac{\pi}{4}$ & $\theta_p^\mathrm{inc}=\frac{63\pi}{128}$  \\
\hline
2  & 5.92E-2 & 5.98E-2 & 2.84E-2 & 1.85E-2 & 2.82E-2 & 8.50E-3 \\
4  & 7.57E-3 & 5.79E-3 & 2.26E-3 & 4.41E-3 & 2.63E-3 & 1.18E-3 \\
8  & 4.29E-4 & 2.15E-4 & 1.17E-4 & 1.11E-4 & 9.45E-5 & 2.75E-5 \\
16 & 6.32E-6 & 7.57E-6 & 2.76E-6 & 1.49E-5 & 8.94E-6 & 7.73E-6 \\
\hline
\multicolumn{7}{|c|}{Kite-shaped} \\
\hline
\multirow{2}{*}{$A/\lambda_s$} & \multicolumn{3}{|c|}{$\theta_s^\mathrm{inc}=0$}  & \multicolumn{3}{|c|}{$\theta_s^\mathrm{inc}=-\frac{63\pi}{128}$}\\
\cline{2-7}
& $\theta_p^\mathrm{inc}=0$ & $\theta_p^\mathrm{inc}=\frac{\pi}{4}$ & $\theta_p^\mathrm{inc}=\frac{63\pi}{128}$
& $\theta_p^\mathrm{inc}=0$ & $\theta_p^\mathrm{inc}=\frac{\pi}{4}$ & $\theta_p^\mathrm{inc}=\frac{63\pi}{128}$  \\
\hline
2  & 1.59E-2 & 6.21E-2 & 1.87E-2 & 1.12E-2 & 3.33E-2 & 4.55E-3 \\
4  & 4.67E-3 & 1.08E-2 & 5.66E-3 & 1.84E-3 & 3.88E-3 & 1.08E-3 \\
8  & 3.17E-5 & 3.22E-4 & 8.70E-5 & 6.41E-5 & 1.67E-4 & 4.45E-5 \\
16 & 9.88E-6 & 2.64E-5 & 1.08E-5 & 4.70E-6 & 1.20E-5 & 2.86E-6 \\
\hline
\end{tabular}
\caption{Relative errors $\epsilon_\infty$ in the total field  resulting from the HD-WGF method for the problems of scattering by a disc-shaped and a kite-shaped obstacle within a half space where $c_p=1$ and $c_s=\sqrt{2}$.}
\label{Table2D.1}
\end{table}

{\bf Two-dimensional examples.} First of all, we consider the problem of scattering by a kite-shaped obstacle within a half space and choose $\omega=4\pi$, $A=16\lambda_s$, $\theta_p^{\mathrm{inc}}=\frac{\pi}{4}$, $c_p=1$ and $c_s=0$. Figures~\ref{Figure2D.1}(a,b) present the total fields $\phi_p^{\mathrm{tot}}$ and $\phi_s^{\mathrm{tot}}$ resulting from the HD-WGF method. As a comparison, the corresponding total fields $\phi_p^{\mathrm{tot}}$ and $\phi_s^{\mathrm{tot}}$ resulting from the WGF method proposed in~\cite{BY21} are presented in Figures~\ref{Figure2D.1}(c,d) which show the efficiency of the new HD-WGF solver. Choosing $c_p=1$ and $c_s=\sqrt{2}$, Table~\ref{Table2D.1} displays the relative errors in the total field that result from use of the proposed HD-WGF method for the case of a disc-shaped or a kite-shaped obstacle within a half space, clearly demonstrating the uniform fast convergence of the proposed approach over wide angular variations, going from normal incidence to grazing. The near fields for
the problem of scattering by the a kite-shaped obstacle for different incident angles are presented
in Figure~\ref{Figure2D.2}, respectively.

\begin{figure}[htbp]
\centering
\begin{tabular}{ccc}
\includegraphics[scale=0.15]{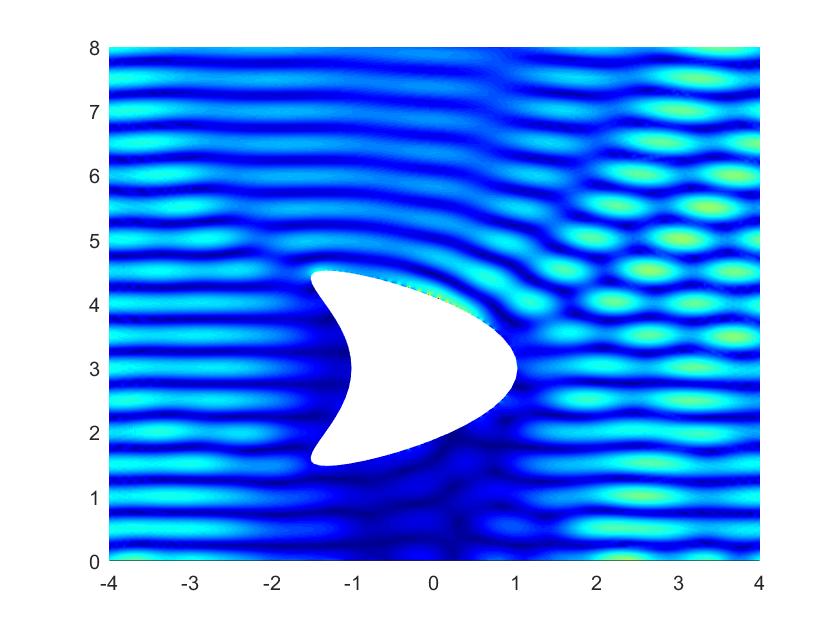} &
\includegraphics[scale=0.15]{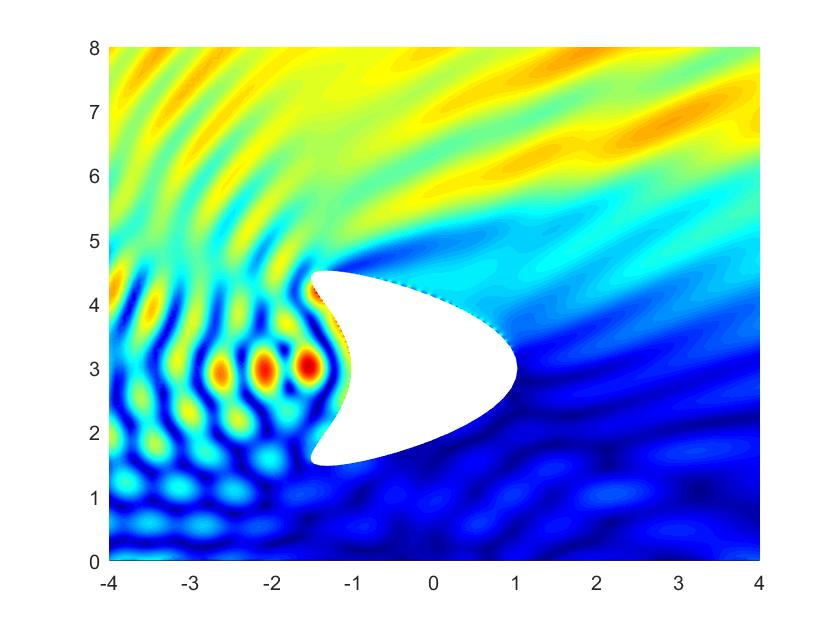} &
\includegraphics[scale=0.15]{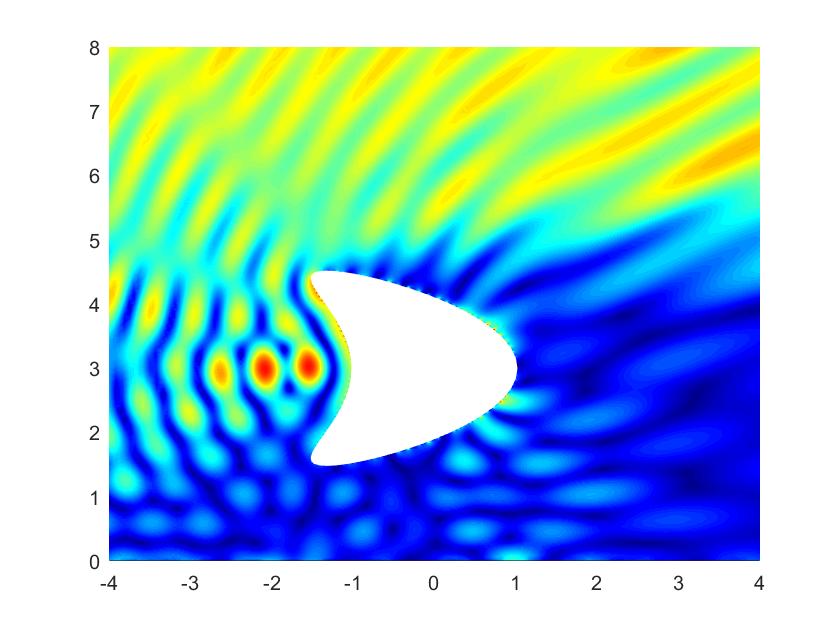} \\
(a) $\phi_p^{\mathrm{tot}}$ & (b) $\phi_p^{\mathrm{tot}}$ & (c) $\phi_p^{\mathrm{tot}}$ \\
\includegraphics[scale=0.15]{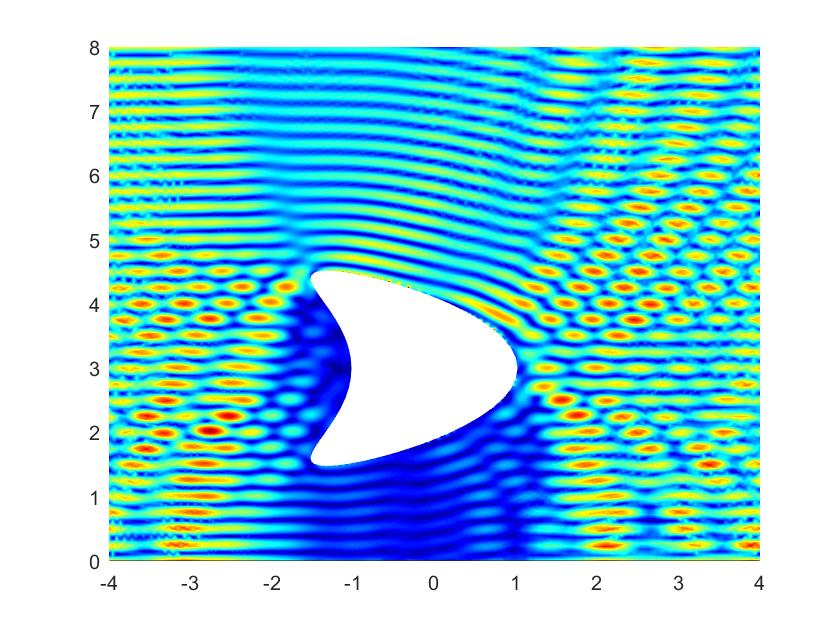} &
\includegraphics[scale=0.15]{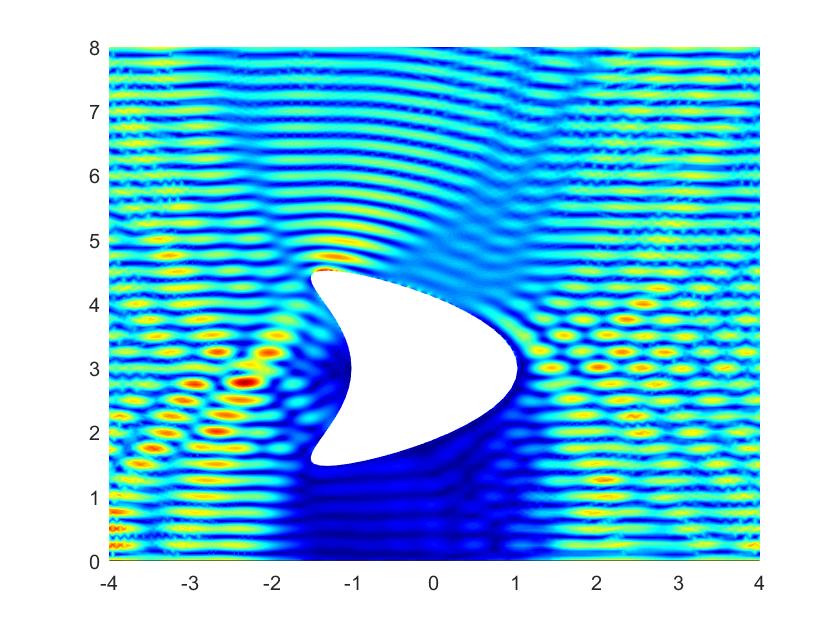} &
\includegraphics[scale=0.15]{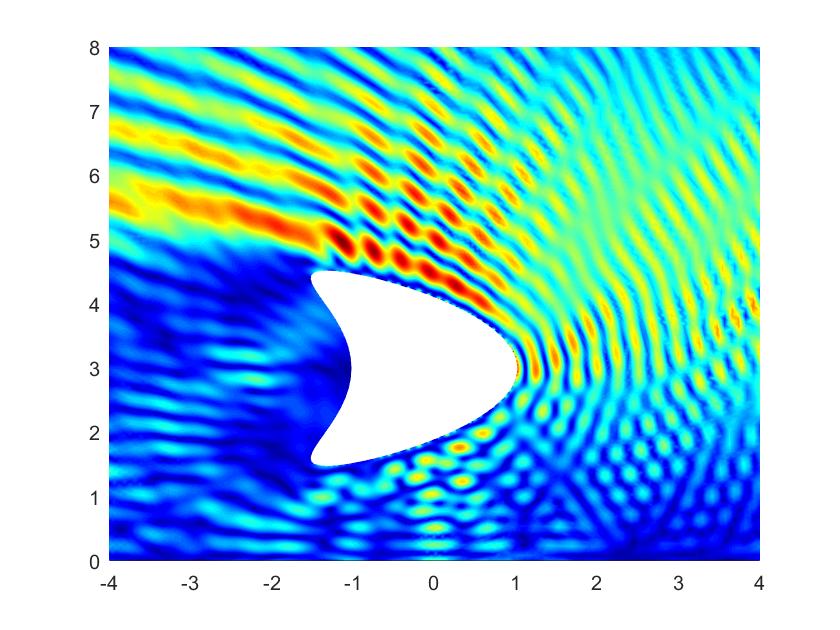} \\
(a) $\phi_s^{\mathrm{tot}}$ & (b) $\phi_s^{\mathrm{tot}}$ & (c) $\phi_s^{\mathrm{tot}}$ \\
\end{tabular}
\caption{Absolute values of the total fields $\phi_p^{\mathrm{tot}}$ and $\phi_s^{\mathrm{tot}}$ resulting from the HD-WGF method for the Dirichlet problem of scattering by a kite-shaped obstacle where $\omega=4\pi$, $A=16\lambda_s$, $c_p=1$ and $c_s=\sqrt{2}$. Here, (a,d): $\theta_p^\mathrm{inc}=0, \theta_s^\mathrm{inc}=0$; (b,e): $\theta_p^\mathrm{inc}=\frac{63\pi}{128}, \theta_s^\mathrm{inc}=0$ and (c,f): $\theta_p^\mathrm{inc}=\frac{63\pi}{128}, \theta_s^\mathrm{inc}=-\frac{63\pi}{128}$.}
\label{Figure2D.2}
\end{figure}

{\bf Three-dimensional examples.}
We test next the performance of the HD-WGF method for the Dirichlet problem of scattering by a spherical and ellipsoidal obstacle within a half space. Letting $\omega=4\pi$, $c_p=1$ and $c_s=\sqrt{2}$, Table~\ref{Table3D.1} displays the numerical errors $\epsilon_\infty$ for different values of $A$, which demonstrate the high accuracy and fast convergence of the proposed solver for all incidence angles. Figures~\ref{Figure3D.1} presents the computed values of the total field for the Dirichlet problem of scattering by a spherical obstacle for different incident angles, respectively. Finally, we consider the Dirichlet problem of scattering by a bean-shaped obstacle over a half space with $\omega=4\pi$, $c_p=1$, $c_s=0$, $\theta_p^\mathrm{inc}=-\frac{\pi}{3}$ and $A=5\lambda_s+1$. The real parts of the total fields $\phi_p^{\mathrm tot}$ and $\bs\phi_s^{\mathrm tot}$ are depicted in Figures~\ref{Figure3D.2}.

\begin{table}[htbp]
	\centering
	\begin{tabular}{|c|c|c|c|c|c|c|}
		\hline
		\multicolumn{7}{|c|}{Spherical obstacle} \\
		\hline
		\multirow{2}{*}{$A/\lambda_s$} & \multicolumn{3}{|c|}{$\theta_s^\mathrm{inc}=0$}  & \multicolumn{3}{|c|}{$\theta_s^\mathrm{inc}=-\frac{63\pi}{128}$}\\
		\cline{2-7}
		& $\theta_p^\mathrm{inc}=0$ & $\theta_p^{inc}=\frac{\pi}{4}$ & $\theta_p^\mathrm{inc}=\frac{63\pi}{128}$
		& $\theta_p^\mathrm{inc}=0$ & $\theta_p^\mathrm{inc}=\frac{\pi}{4}$ & $\theta_p^\mathrm{inc}=\frac{63\pi}{128}$  \\
		\hline
		2  & 3.05E-1 & 9.10E-2 & 1.34E-1 & 2.73E-1 & 9.98E-1 & 1.38E-1 \\
		4  & 2.85E-2 & 1.59E-2 & 4.97E-3 & 1.98E-2 & 3.48E-2 & 1.52E-2 \\
		6  & 6.94E-3 & 8.16E-3 & 5.05E-4 & 5.80E-3 & 4.83E-3 & 5.93E-3 \\
		8  & 4.34E-4 & 7.13E-4 & 4.87E-4 & 7.46E-4 & 1.02E-3 & 6.91E-4 \\
		\hline
		\multicolumn{7}{|c|}{Ellipsoidal obstacle} \\
		\hline
		\multirow{2}{*}{$A/\lambda_s$} & \multicolumn{3}{|c|}{$\theta_s^\mathrm{inc}=0$}  & \multicolumn{3}{|c|}{$\theta_s^\mathrm{inc}=-\frac{63\pi}{128}$}\\
		\cline{2-7}
		& $\theta_p^\mathrm{inc}=0$ & $\theta_p^\mathrm{inc}=\frac{\pi}{4}$ & $\theta_p^\mathrm{inc}=\frac{63\pi}{128}$
		& $\theta_p^\mathrm{inc}=0$ & $\theta_p^\mathrm{inc}=\frac{\pi}{4}$ & $\theta_p^\mathrm{inc}=\frac{63\pi}{128}$  \\
		\hline
		2  & 2.25E-2 & 2.07E-2 & 3.13E-2 & 3.86E-2 & 3.24E-1 & 1.82E-1 \\
		4  & 7.42E-3 & 9.48E-3 & 1.37E-2 & 6.51E-3 & 1.85E-2 & 9.54E-3 \\
		6  & 2.14E-3 & 5.40E-3 & 5.38E-3 & 1.51E-3 & 6.46E-3 & 6.80E-3 \\
		8  & 7.57E-4 & 2.04E-3 & 8.72E-4 & 5.86E-4 & 1.79E-3 & 1.57E-3 \\
		\hline
	\end{tabular}
	\caption{Relative errors $\epsilon_\infty$ in the total field resulting from the HD-WGF method for the problems of scattering by a spherical and an ellipsoidal obstacle on a half space where $c_p=1$ and $c_s=\sqrt{2}$.}
	\label{Table3D.1}
\end{table}

\begin{figure}[htbp]
	\centering
	\begin{tabular}{ccc}
		\includegraphics[scale=0.15]{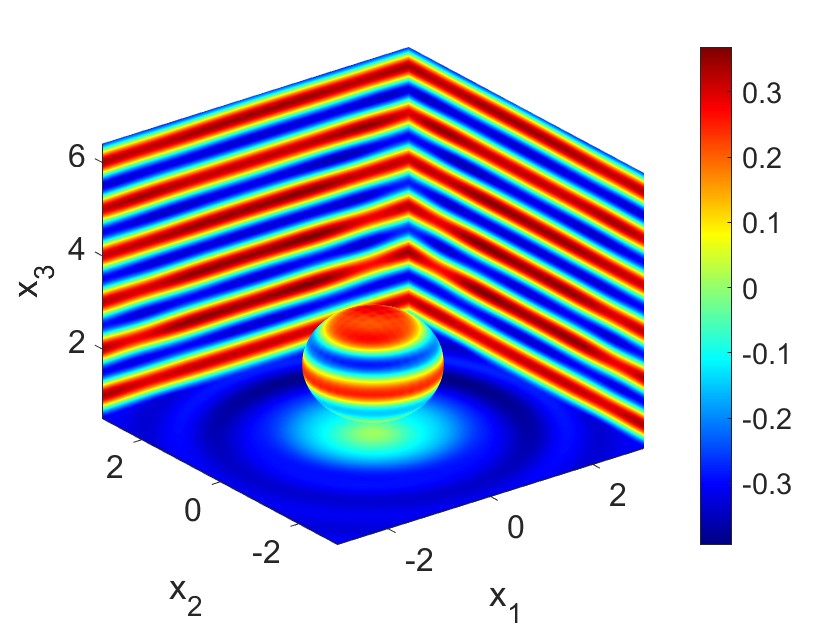} &
		\includegraphics[scale=0.15]{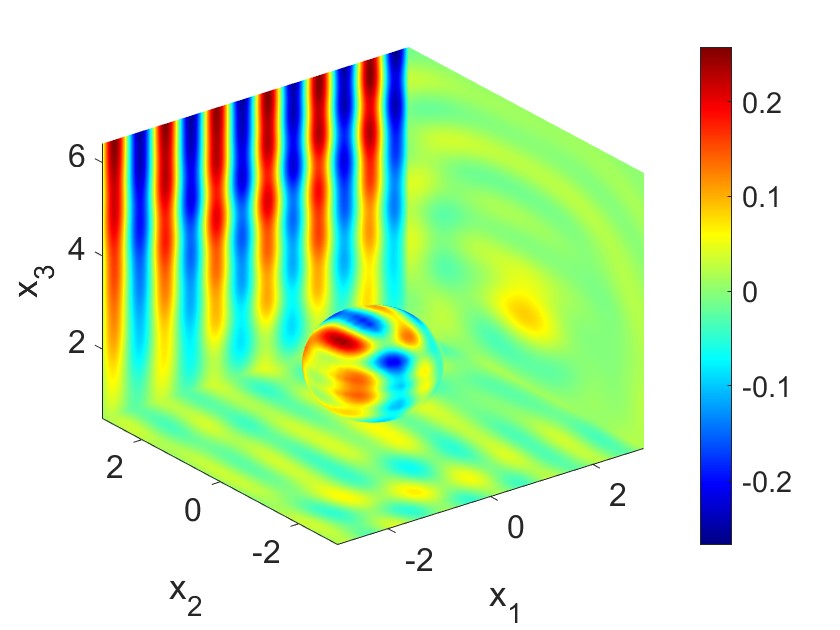} &
		\includegraphics[scale=0.15]{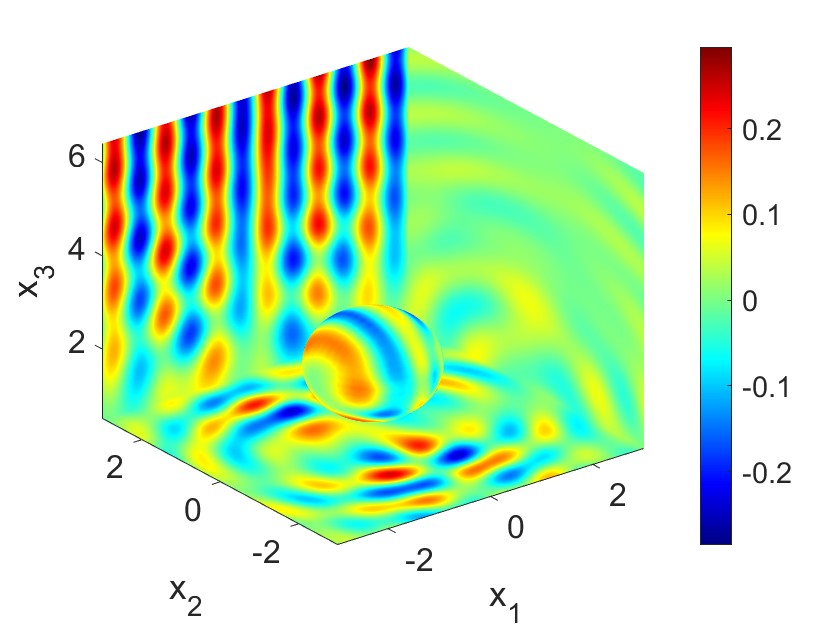} \\
		(a) Real($\phi_p^{\mathrm{tot}}$) & (b) Real($\phi_p^{\mathrm{tot}}$) & (c) Real($\phi_p^{\mathrm{tot}}$) \\
		\includegraphics[scale=0.15]{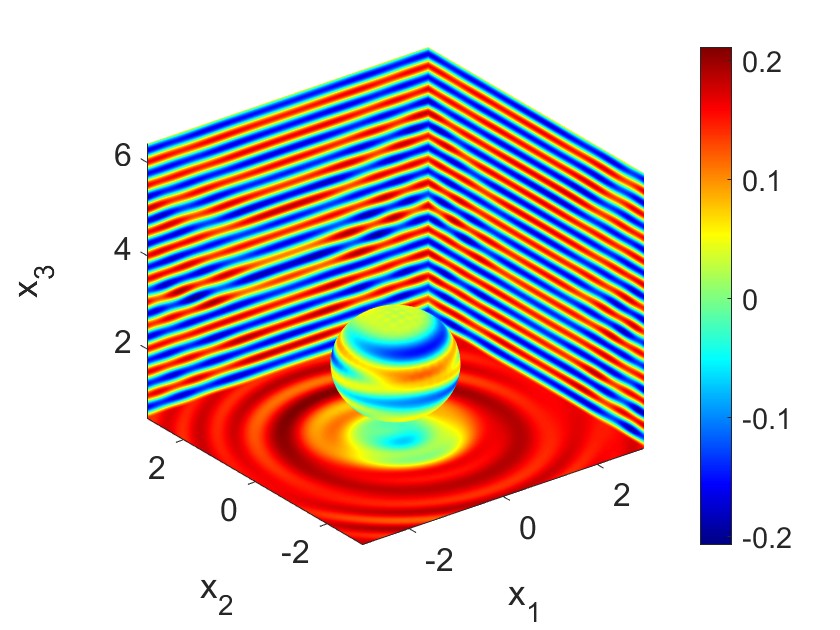} &
		\includegraphics[scale=0.15]{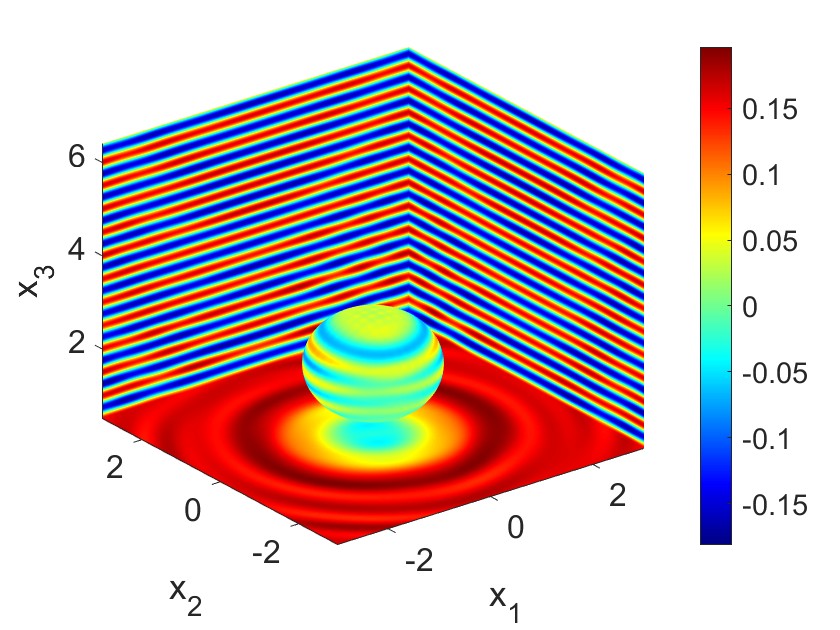} &
		\includegraphics[scale=0.15]{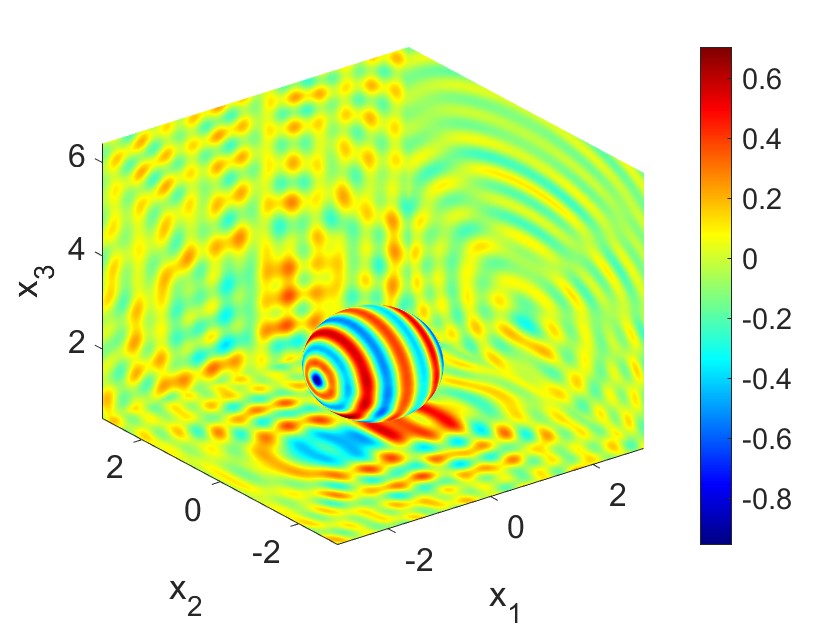} \\
		(d) Real($(\bs\phi_s^{\mathrm{tot}})_1$) & (e) Real($(\bs\phi_s^{\mathrm{tot}})_1$) & (f) Real($(\bs\phi_s^{\mathrm{tot}})_1$) \\
	\end{tabular}
	\caption{Real parts of the total fields $\phi_p^{\mathrm{tot}}$ and the first component of $\bs\phi_s^{\mathrm{tot}}$ resulting from the HD-WGF method for the Dirichlet problem of scattering by a spherical obstacle where $\omega=4\pi$, $A=5\lambda_s+1$, $c_p=1$ and $c_s=\sqrt{2}$. Here, (a,d): $\theta_p^\mathrm{inc}=0, \theta_s^\mathrm{inc}=0$; (b,e): $\theta_p^\mathrm{inc}=\frac{63\pi}{128}, \theta_s^\mathrm{inc}=0$ and (c,f): $\theta_p^\mathrm{inc}=\frac{63\pi}{128}, \theta_s^\mathrm{inc}=-\frac{63\pi}{128}$.}
	\label{Figure3D.1}
\end{figure}

\begin{figure}[htbp]
	\centering
	\begin{tabular}{cc}
		\includegraphics[scale=0.15]{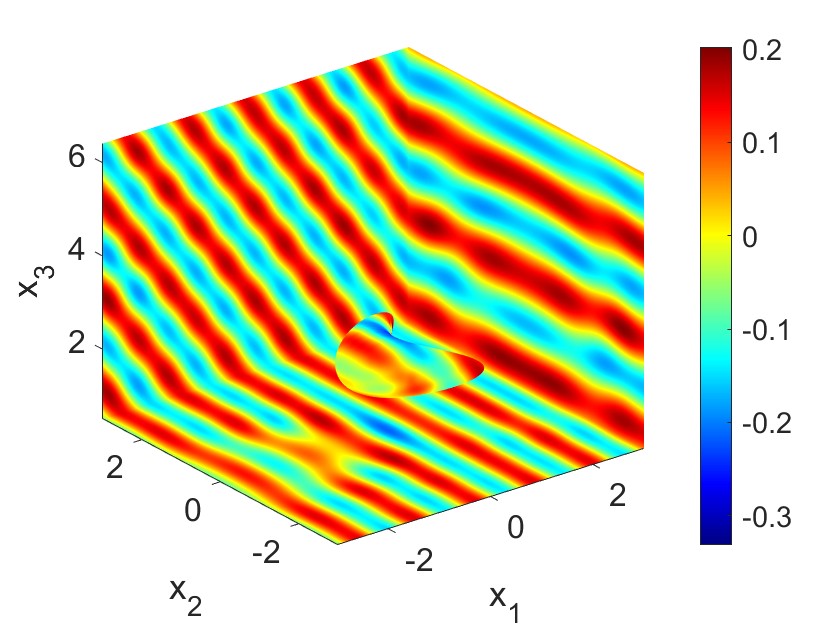} &
		\includegraphics[scale=0.15]{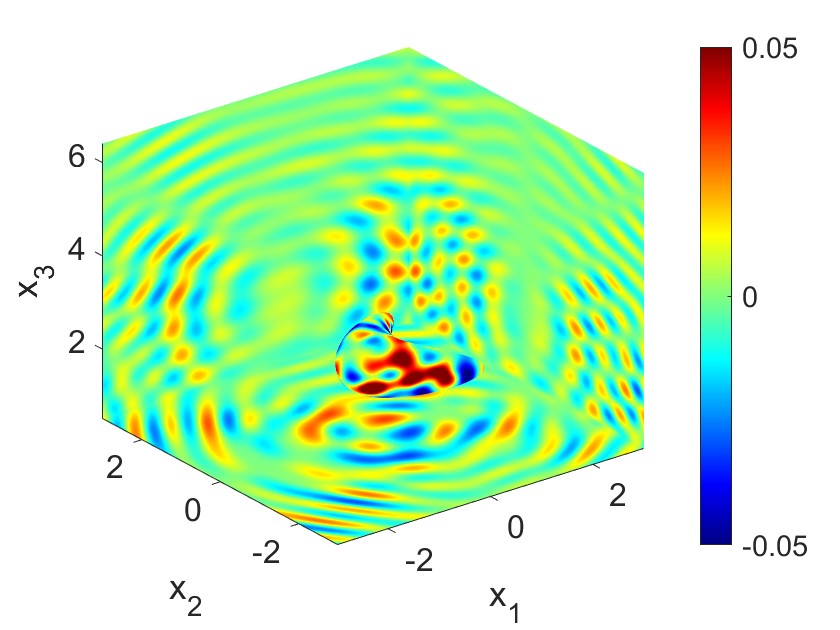} \\
		(a) Real($\phi_p^{\mathrm{tot}}$) & (b) Real($(\bs\phi_s^{\mathrm{tot}})_1$) \\
		\includegraphics[scale=0.15]{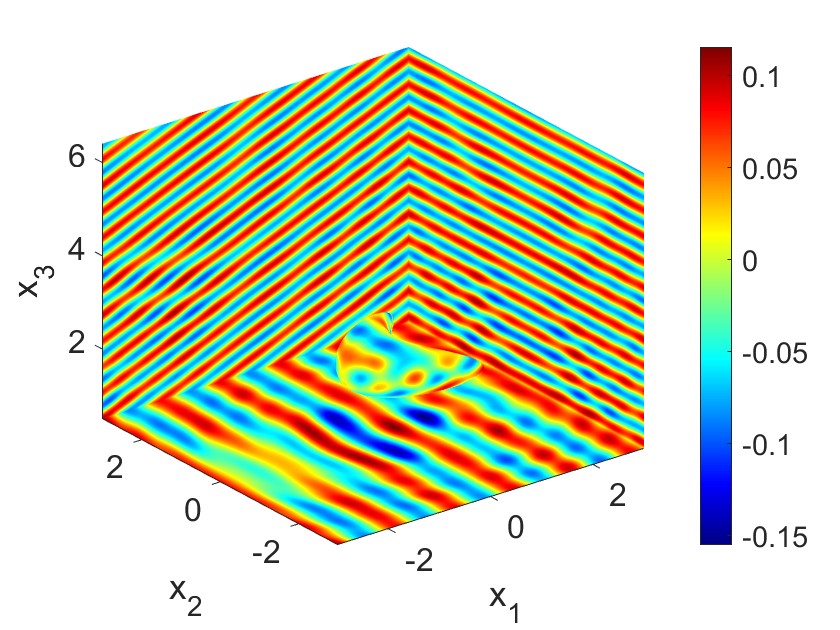} &
		\includegraphics[scale=0.15]{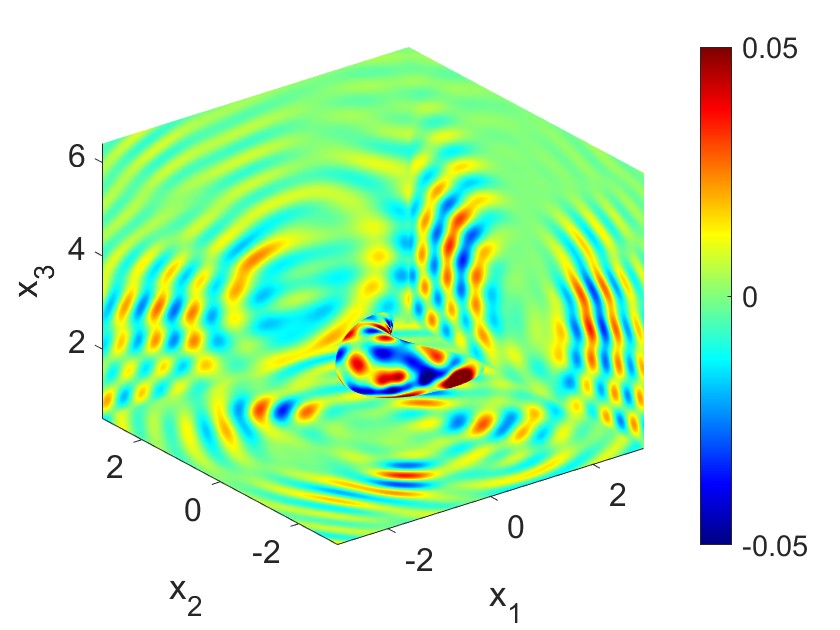} \\
		(c)Real($(\bs\phi_s^{\mathrm{tot}})_2$) & (d)Real($(\bs\phi_s^{\mathrm{tot}})_3$) \\
	\end{tabular}
	\caption{Real parts of the total fields $\phi_p^{\mathrm{tot}}$ and $\bs\phi_s^{\mathrm{tot}}$ resulting from the HD-WGF method for the Dirichlet problem of scattering by a bean-shaped obstacle where $\omega=4\pi$, $A=5\lambda_s+1$, $\theta_p^{\mathrm{inc}}=-\frac{\pi}{3}$, $c_p=1$ and $c_s=0$.}
	\label{Figure3D.2}
\end{figure}

\section{Conclusion}
\label{sec:6}

This paper proposed a novel HD-WGF method for solving the elastic scattering problems on a half-space in both 2D and 3D. By means of the Helmholtz decomposition, the original problem was transformed into a coupled system of pressure and shear waves, which satisfy the Helmholtz and Helmholtz/Maxwell equations, respectively, and the corresponding BIEs in terms of free-space fundamental solution of Helmholtz equation were derived. Then the windowing idea was introduced for the purpose of truncation and a ``correction'' strategy was proposed, as demonstrated by a variety of numerical
tests, to ensure the uniformly fast convergence for all incident angles of plane incidence. The extensions of the HD-WGF method to the elastic layered-medium problems with Neumann or transmission boundary conditions and multi-physics problems, including e.g. fluid-solid interaction problems, electromagnetic-elastic coupled problems, etc., are left for future work.

\section*{Acknowledgments}
TY gratefully acknowledges support from China NSF Grants 12171465 and 12288201. WZ was supported in part by China NSF grant 12226354 and the China NSF for Distinguished Young Scholars 11725106. 

\appendix
\section{Exact solutions to the problems of scattering by infinite plane}
\label{sec:A}
\renewcommand{\theequation}{A.\arabic{equation}}
\renewcommand{\thetheorem}{A.\arabic{theorem}}

The expressions of the exact solutions $\phi_{p,\Pi}^\mathrm{tot}, \phi_{s,\Pi}^\mathrm{tot}/ \boldsymbol \phi_{s,\Pi}^\mathrm{tot}$ to the problems of scattering by infinite plane $\Pi$ are given as follows.
\begin{itemize}
\item {\bf Two-dimensional case.} 
If the incident field ${\boldsymbol u}^\mathrm{inc}$ only contains the plane pressure wave ${\boldsymbol u}_p^\mathrm{inc}$ (see \eqref{2Dinc}), then the total field is given by ${\boldsymbol u}_{p,\Pi}^\mathrm{tot}={\boldsymbol u}_p^\mathrm{inc}+{\boldsymbol u}_p^\mathrm{ref}$, where 
$\bs u^\mathrm{ref}_p$ admits the Helmholtz decomposition
%\be
%\label{2DHDref}
\[
\bs u^\mathrm{ref}_p=\nabla \phi_{p,p}^\mathrm{ref}+ \overrightarrow{\mathrm{curl}}\,\phi_{s,p}^\mathrm{ref},
\]
with
\[
\phi_{p,p}^\mathrm{ref}=\frac{l^p_p}{k_p}e^{i (\alpha_px_1+\beta_{p,p}x_2)},\quad \phi_{s,p}^\mathrm{ref}=\frac{l^p_s}{k_s}e^{i (\alpha_px_1+\beta_{s,p}x_2)}.
\]
Therefore, together with
\ben
\phi_p^\mathrm{inc}=\frac{c_p}{k_p}e^{i(\alpha_px_1-\beta_{p,p}x_2)},\quad \phi_s^\mathrm{inc}=0,
%,\quad  \phi_s^\mathrm{inc}=\frac{l^p_s}{k_s}e^{i(\alpha_sx_1-\beta_{s,s}x_2)},
\enn
we can get $\phi_{\xi,p,\Pi}^\mathrm{tot}=\phi_{\xi}^\mathrm{inc}+\phi_{\xi,p}^\mathrm{ref}$, $\xi=p,s$. To determine the unknown parameters $l_\xi^p, \xi=p,s$, using the Dirichlet boundary condition on $\Pi$, the following linear system results:
\begin{eqnarray}
	\begin{bmatrix}
	\alpha_p & \beta_{s,p} \\
	\beta_{p,p} & -\alpha_p
	\end{bmatrix}\begin{bmatrix}
	l^p_p\\
	l^p_s
	\end{bmatrix} =\begin{bmatrix}
	-\alpha_p \\
	\beta_{p,p}
	\end{bmatrix},
	\end{eqnarray}
which, by directly calculating, further gives
\ben
l^p_p=\frac{\beta_{p,p}\beta_{s,p}-\alpha_p^2}{\beta_{p,p}\beta_{s,p}+\alpha_p^2},\quad l^p_s=-\frac{2\alpha_p\beta_{p,p}}{\beta_{p,p}\beta_{s,p}+\alpha_p^2}.
\enn

If the incident field ${\boldsymbol u}^\mathrm{inc}$ only contains the plane shear wave ${\boldsymbol u}_s^\mathrm{inc}$ (see \eqref{2Dinc}), then the total field is given by ${\boldsymbol u}_{s,\Pi}^\mathrm{tot}={\boldsymbol u}_s^\mathrm{inc}+{\boldsymbol u}_s^\mathrm{ref}$, where 
$\bs u^\mathrm{ref}_s$ admits the Helmholtz decomposition
%\be
%\label{2DHDref}
\[
\bs u^\mathrm{ref}_s=\nabla \phi_{p,s}^\mathrm{ref}+ \overrightarrow{\mathrm{curl}}\,\phi_{s,s}^\mathrm{ref},
\]
with
\[
\phi_{p,s}^\mathrm{ref}=\frac{l^s_p}{k_p}e^{i (\alpha_sx_1+\beta_{p,s}x_2)},\quad \phi_{s,s}^\mathrm{ref}=\frac{l^s_s}{k_s}e^{i (\alpha_sx_1+\beta_{s,s}x_2)}.
\]
Therefore, together with
\ben
\phi_p^\mathrm{inc}=0,\quad \phi_s^\mathrm{inc}=\frac{c_s}{k_s}e^{i(\alpha_sx_1-\beta_{s,s}x_2)},
\enn
we can get $\phi_{\xi,s,\Pi}^\mathrm{tot}=\phi_\xi^\mathrm{inc}+\phi_{\xi,s}^\mathrm{ref}$, $\xi=p,s$. To determine the unknown parameters $l_\xi^s, \xi=p,s$, using the Dirichlet boundary condition on $\Pi$, the following linear system results:
\begin{eqnarray}
	\begin{bmatrix}
	\alpha_s & \beta_{s,s}\\
	\beta_{p,s} & -\alpha_s
	\end{bmatrix}\begin{bmatrix}
	l^s_p\\
	l^s_s
	\end{bmatrix} =\begin{bmatrix}
	\beta_{s,s} \\
	\alpha_s
	\end{bmatrix},
	\end{eqnarray}
which, by directly calculating, further gives
\ben
l^s_p=\frac{2\alpha_s\beta_{s,s}}{\beta_{p,s}\beta_{s,s}+\alpha_s^2},\quad l^s_s=\frac{\beta_{p,s}\beta_{s,s}-\alpha_s^2}{\beta_{p,s}\beta_{s,s}+\alpha_s^2}.
\enn
Therefore, if ${\bs u}^\mathrm{inc}= c_p{\bs u}_p^\mathrm{inc}+c_s{\bs u}_s^\mathrm{inc}$, where $c_\xi,\xi=p,s$ are constants, we have $\phi_{\xi,\Pi}^\mathrm{tot}=\phi_{\xi,p,\Pi}^\mathrm{tot}+\phi_{\xi,s,\Pi}^\mathrm{tot}$, $\xi=p,s$.

\item {\bf Three-dimensional case.} Let's onsider next the problem of scattering by a flat surface in three dimensions. Assume that ${\boldsymbol u}^\mathrm{inc}$ is a compressional plane wave field ${\boldsymbol u}_p^\mathrm{inc}$ (see \eqref{3Dinc}). Analogous to the derivation for the two-dimensional problem, it follows that
\begin{align}
& \phi_{p,p}^\mathrm{ref}=
-\frac{(\boldsymbol \alpha_p^\top, \beta_{s,p})^\top\cdot (\boldsymbol \alpha_p^\top, -\beta_{p,p})^\top}{k_p(\beta_{p,p}\beta_{s,p}+|\boldsymbol\alpha_p|^2)}e^{i (\boldsymbol\alpha_p\cdot \widetilde{\boldsymbol x}+\beta_{p,p}x_3)},\\
&\bs\phi_{s,p}^\mathrm{ref}=-\frac{(\boldsymbol \alpha_p^\top, -\beta_{p,p})^\top\times (\boldsymbol \alpha_p^\top, \beta_{p,p})^\top}{k_p(\beta_{p,p}\beta_{s,p}+|\boldsymbol\alpha_p|^2)} e^{i (\boldsymbol\alpha_p\cdot \widetilde{\boldsymbol x}+\beta_{s,p}x_3)}.
\end{align}
Considering the shear incident plane wave field ${\boldsymbol u}_s^\mathrm{inc}$, it follows that
\begin{align}
&\phi_{p,s}^\mathrm{ref}=
-\frac{(\boldsymbol \alpha_s^\top, \beta_{s,s})^\top\cdot\boldsymbol d^\perp}{\beta_{p,s}\beta_{s,s}+|\boldsymbol\alpha_s|^2}e^{i (\boldsymbol\alpha_s\cdot \widetilde{\boldsymbol x}+\beta_{p,s}x_3)},\\
&\bs\phi_{s,s}^\mathrm{ref}=-\frac{\boldsymbol d^\perp\times (\boldsymbol \alpha_s^\top, \beta_{p,s})^\top}{\beta_{p,s}\beta_{s,s}+|\boldsymbol\alpha_s|^2}e^{i (\boldsymbol\alpha_s\cdot \widetilde{\boldsymbol x}+\beta_{s,s}x_3)},
\end{align}
where $\bs d^\perp:=\frac{1}{k_s}\begin{pmatrix}
	\bs \alpha_s^\top\\
	-\beta_{s,s}
	\end{pmatrix}\times \bs d_s$. Therefore, for ${\bs u}^\mathrm{inc}= c_p{\bs u}_p^\mathrm{inc}+c_s{\bs u}_s^\mathrm{inc}$, where $c_\xi,\xi=p,s$ are constants, the exact solutions $\phi_{p,\Pi}^\mathrm{tot}$ and $ \bs\phi_{s,\Pi}^\mathrm{tot}$ can be calculated through $\phi_{p,\Pi}^\mathrm{tot}=\phi_{p,p,\Pi}^\mathrm{tot}+\phi_{p,s,\Pi}^\mathrm{tot}$ and $\bs\phi_{s,\Pi}^\mathrm{tot}=\bs\phi_{s,p,\Pi}^\mathrm{tot}+\bs\phi_{s,s,\Pi}^\mathrm{tot}$. 

%It is easily to calculate $\phi_{p,\Pi}^\mathrm{tot}$ and $ \bs\phi_{s,\Pi}^\mathrm{tot}$. 
\end{itemize}

\end{document}